\newtheorem{myExa}{Example}
\newtheorem{myLemma}{Lemma}
\newtheorem{myTheorem}{Theorem}
\newcommand{\method}[1]{#1}
\newcommand{\methodbt}[1]{#1}
\newcommand{\squishlist}{
	\begin{list}{$\bullet$}
		{ \setlength{\itemsep}{1pt}
			\setlength{\parsep}{1pt}
			\setlength{\topsep}{2.5pt}
			\setlength{\partopsep}{0.5pt}
			\setlength{\leftmargin}{1em}
			\setlength{\labelwidth}{1em}
			\setlength{\labelsep}{0.6em}
		}
	}
	\newcommand{\squishend}{
	\end{list}
}
  \newcommand\figcaption{\def\@captype{figure}\caption}
  \newcommand\tabcaption{\def\@captype{table}\caption}
\begin{document}

\title{Accelerating Graph Indexing for ANNS on Modern CPUs}

\author{Mengzhao Wang}
\affiliation{%
  \institution{Zhejiang University}
}
\email{wmzssy@zju.edu.cn}

\author{Haotian Wu}
\affiliation{%
  \institution{Zhejiang University}
}
\email{haotian.wu@zju.edu.cn}

\author{Xiangyu Ke}
\affiliation{%
  \institution{Zhejiang University}
}
\email{xiangyu.ke@zju.edu.cn}

\author{Yunjun Gao}
\affiliation{%
  \institution{Zhejiang University}
}
\email{gaoyj@zju.edu.cn}

\author{Yifan Zhu}
\affiliation{%
  \institution{Zhejiang University}
}
\email{xtf_z@zju.edu.cn}

\author{Wenchao Zhou}
\affiliation{%
  \institution{Alibaba Group}
}
\email{zwc231487@alibaba-inc.com}

\begin{abstract}
{In high-dimensional vector spaces, Approximate Nearest Neighbor Search (ANNS) is a key component in database and artificial intelligence infrastructures. 
Graph-based methods, particularly HNSW, have emerged as leading solutions among various ANNS approaches, offering an impressive trade-off between search efficiency and accuracy. 
Many modern vector databases utilize graph indexes as their core algorithms, benefiting from various optimizations to enhance search performance. However, the high indexing time associated with graph algorithms poses a significant challenge, especially given the increasing volume of data, query processing complexity, and dynamic index maintenance demand. This has rendered indexing time a critical performance metric for users.}

{In this paper, we comprehensively analyze the underlying causes of the low graph indexing efficiency on modern CPUs, identifying that distance computation dominates indexing time, primarily due to high memory access latency and suboptimal arithmetic operation efficiency. 
We demonstrate that distance comparisons during index construction can be effectively performed using compact vector codes at an appropriate compression error. Drawing from insights gained through integrating existing compact coding methods in the graph indexing process, we propose a novel compact coding strategy, named \texttt{Flash}, designed explicitly for graph indexing and optimized for modern CPU architectures. By minimizing random memory accesses and maximizing the utilization of SIMD (Single Instruction, Multiple Data) instructions, \texttt{Flash} significantly enhances cache hit rates and arithmetic operations. Extensive experiments conducted on eight real-world datasets, ranging from ten million to one billion vectors, exhibit that \texttt{Flash} achieves a speedup of 10.4$\times$ to 22.9$\times$ in index construction efficiency, while maintaining or improving search performance.}

\end{abstract}

\maketitle

\section{Introduction}
\label{sec: intro}
Approximate Nearest Neighbor Search (ANNS) in high-dimensional spaces has become integral due to the advent of deep learning embedding techniques for managing unstructured data \cite{Milvus_sigmod2021,graph_survey_vldb2021,tau-MG,RaBitQ}. This spans a variety of applications, such as recommendation systems \cite{ChenLZWLMHJXDZ22,HanZYXCS23}, information retrieval \cite{GouDWK24,WangMW22}, and vector databases \cite{VDB_tutorial_SIGMOD24,Manu_zilliz}. 
Owing to the ``curse of dimensionality'' and the ever-increasing volume of data, ANNS seeks to optimize both speed and accuracy, making it more practical compared to the exact solutions, which can be prohibitively time-consuming.
Given a query vector, ANNS efficiently finds its $k$ nearest vectors in a vector dataset following a specific distance metric. 
Current studies explore four index types for ANNS: Tree-based \cite{AroraSK018,LuWWK20}, Hash-based \cite{HuangFZFN15,ZhaoZYLXZJ23}, Quantization-based \cite{PQ,PQfast}, and Graph-based methods \cite{NSG,HVS}. 
Among these, graph-based algorithms like \method{HNSW} \cite{HNSW} have demonstrated superior performance, establishing themselves as the industry standard \cite{lanns,es_hnsw,douze2024faiss} and attracting substantial academic interest \cite{ADSampling,HM_ANN,LiZAH20}.
For example, \method{HNSW} underpins various ANNS services or vector databases, including PostgreSQL \cite{PASE}, ElasticSearch \cite{es_hnsw}, Milvus \cite{Milvus_sigmod2021}, and Pinecone \cite{pinecone_hnsw}. 
Recent publications from the data management community \cite{sigmod_24_papers, vldb_24_papers, icde_24_papers} highlight the focus on ANNS research regarding \method{HNSW}\footnote{Unless otherwise specified, we illustrate our framework design using this representative graph-based index.}.

Although numerous studies \cite{LiZAH20,ADSampling,HVS,yue2023routing,Finger,ColemanSSS22,PengZLJR23} have explored optimizations for search performance, research on improving index construction efficiency remains limited\footnote{{Since multi-thread graph indexing on CPUs is a standard practice, we present all discussed graph algorithms in their multi-thread versions by default.}}, which is increasingly critical in real-world applications \cite{XuLLXCZLYYYCY23}. 
First, the {\em vast volume of data} poses significant challenges for index construction. For instance, building an \method{HNSW} index on tens of millions of vectors typically requires about \textit{10 hours} \cite{HVS,lanns}, billion-scale datasets may extend construction time to \textit{five days}, even with relaxed construction parameters \cite{DiskANN}. 
Second, complex query processing, such as hybrid search \cite{Filtered-DiskANN,mengzhao_neurips2023}, adds constraints on ANNS, {\em complicating the construction process} and increasing indexing times \cite{ZuoQZLD24}. For example, constructing a specialized \method{HNSW} index for attribute-constrained ANNS takes 33$\times$ longer than a standard index \cite{PatelKGZ24}. 
More importantly, {\em continuous data and embedding model updates} \cite{Neos,Jingdong_paper,SundaramTSMIMD13,embedding_models_update_1,embedding_models_update_2} necessitate a periodic reconstruction process based on the LSM-Tree framework \cite{Milvus_sigmod2021,ADBV,pan2023survey,nmslib_issues73,SingleStore-V,XuLLXCZLYYYCY23}.
While some approaches aim to avoid rebuilding \cite{Fresh-DiskANN}, they are often unsuitable for embedding model updates \cite{PouyanfarSYTTRS19,embedding_models_update_1} and can result in diminished search performance \cite{nmslib_issues73}, with accuracy dropping from 0.95 to 0.88 after 20 update cycles on \method{HNSW} \cite{Fresh-DiskANN}. 
Thus, reconstruction efficiency has become a bottleneck in modern vector databases \cite{pan2023survey}, as it {\bf \em directly impacts the ability to rapidly release new services with up-to-date data and high search performance} \cite{Fresh-DiskANN,ADBV}. In business scenarios, index rebuilding often occurs overnight during low user activity \cite{ADBV}, constrained to a few hours. Serving approximately 100M vectors on a single node, the \method{HNSW} build time frequently exceeds 12 hours, failing to meet the requirement.

To enhance the index construction efficiency, employing specialized hardware like GPUs is a straightforward approach \cite{WangZZY21,ShiZZJHLH18}. 
By leveraging GPUs' parallel computing capabilities, recent studies have parallelized distance computations and data structure maintenance for the graph indexing process on \textit{million-scale} datasets \cite{GANNS,CAGRA,SONG}, achieving an order of magnitude speedup over single-thread CPU implementations \cite{GANNS}. 
However, challenges such as high costs, memory constraints, and additional engineering efforts limit their practical applications \cite{ZhangL0L024}. 
For instance, even high-end GPUs like the NVIDIA A100, with tens of gigabytes of memory, struggle to accommodate large vector datasets requiring hundreds of gigabytes. 
As noted by Stonebraker in his recent survey \cite{Stonebraker_SIGMOD_Record}, ``\textit{If data does not fit in GPU memory, query execution bottlenecks on loading data into the device, significantly diminishing parallelization benefits.}'' Consequently, CPU-based deployment remains the prevalent choice in practical scenarios \cite{Milvus_sigmod2021,ColemanSSS22,douze2024faiss,PASE}, providing a cost-effective, memory-sufficient, and easy-to-use solution. 
This motivates us to explore optimization opportunities to improve the index construction efficiency on modern CPUs.

\begin{figure}
  \setlength{\abovecaptionskip}{0cm}
  \setlength{\belowcaptionskip}{-0.4cm}
  \centering
  \footnotesize
  \stackunder[0.5pt]{\includegraphics[scale=0.33]{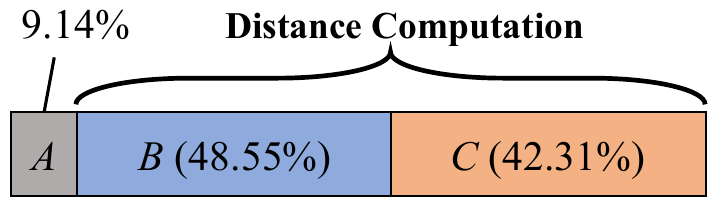}}{(a) LAION-1M ($D=768$)}
  \hspace{0.15cm}
  \stackunder[0.5pt]{\includegraphics[scale=0.32]{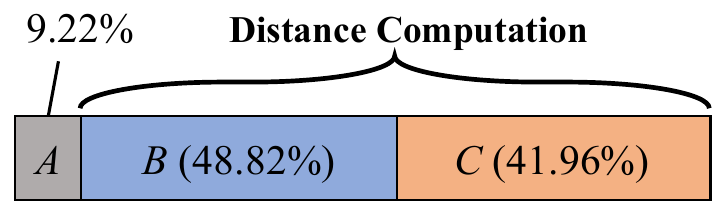}}{(b) ARGILLA-1M ($D=1024$)}
  \newline
  \caption{Profiling of HNSW indexing time. Distance computation constitutes the majority of the indexing time and consists of two components: memory accesses (\textit{B}) and arithmetic operations (\textit{C}). Other tasks (\textit{A}), such as data structure maintenance, account for a minor portion.}
  \label{fig: indexing profile}
  \vspace{-0.4cm}
\end{figure}

In this paper, we conduct an in-depth analysis of the \method{HNSW} construction process, identifying the root causes of inefficiency on modern CPUs. Figure \ref{fig: indexing profile} illustrates HNSW indexing time profiled using the \textit{perf} tool \cite{perf} on two real-world embedding datasets \cite{laion, argilla}. Results indicate that distance computation constitutes over 90\% of the total construction time, marking it as the major bottleneck, while memory accesses and arithmetic operations contribute similarly.
The lack of spatial locality in the \method{HNSW} index necessitates random memory accesses to fetch vectors for almost every distance computation. Building \method{HNSW} on a dataset of size $n$ requires $O(n\log(n))$ distance calculations \cite{HNSW}, leading to frequent cache misses as target data is often uncached. As a result, the memory controller frequently loads data from main memory, resulting in high access latency.
Additionally, current vector data typically consists of high-dimensional floating-point numbers, occupying $4\cdot D$ bytes for dimension $D$ (e.g., 3KB for $D=768$). However, Single Instruction, Multiple Data (SIMD) registers are restricted to hundreds of bits (e.g., 128 bits for the SSE instruction set) \cite{intel_simd}, much smaller than vector sizes. Thus, each distance computation requires hundreds of memory reads to load vector segments into a register, reducing the efficiency of SIMD operations. Notably, all distances are calculated solely for comparison during HNSW construction (refer to Section \textbf{\ref{subsec: problem analysis}}). Recent studies suggest that exact distances are unnecessary for comparison \cite{ADSampling,yang2024bridging,wang2024distance}.

To address these issues, we optimize the \method{HNSW} construction process by reducing random memory accesses and effectively utilizing SIMD instructions. The theoretical analysis demonstrates that distance comparisons during index construction can be performed effectively using compact vector codes at an appropriate compression error. We initially implement mainstream vector compression methods—Product Quantization (PQ) \cite{PQ}, Scalar Quantization (SQ) \cite{LVQ}, and Principal Component Analysis (PCA) \cite{PCA}—into the \method{HNSW} construction process. However, these methods yield limited improvements in indexing efficiency or degrade search performance, as they do not align with HNSW's construction characteristics. Further observations show that they fail to reduce excessive random memory accesses and do not leverage SIMD advantages.
This motivates us to develop a speci\underline{\textbf{f}}ic compact coding strategy, a\underline{\textbf{l}}ongside memory l\underline{\textbf{a}}yout optimization for the con\underline{\textbf{s}}truction process of \underline{\textbf{H}}NSW, named \texttt{Flash}. To maximize SIMD instruction utilization, \texttt{Flash} is designed to accommodate SIMD register features, enabling concurrent execution of distance computations within the SIMD register. To minimize random memory accesses, \texttt{Flash} is effectively organized to match data access and register load patterns.
Ultimately, \texttt{Flash} avoids numerous random memory accesses and enhances SIMD usage, achieving an order of magnitude speedup in construction efficiency while maintaining or improving search performance. To the best of our knowledge, this is the first work to reach such significant speedup on modern CPUs.

To sum up, this paper makes the following contributions.

\squishlist

\item We deeply analyze the construction process of \method{HNSW} and identify the root causes of low construction efficiency on modern CPUs. Specifically, we find that distance computation suffers from high memory access latency and low arithmetic operation efficiency, which accounts for over 90\% of indexing time.

\item We propose a novel compact coding strategy, \texttt{Flash}, specifically designed for \method{HNSW} construction, optimizing index layout for efficient memory accesses and SIMD utilization. This approach improves cache hits and register-resident computation.

\item We conduct extensive evaluations on real-world datasets, ranging from ten million to billion-scale, demonstrating that construction efficiency can be improved by over an order of magnitude, while search performance remains the same or even improves.

\item We summarize three notable findings from our research: (1) a compact coding method that significantly enhances search performance may not be suitable for index construction; (2) reducing more dimensions may bring higher accuracy; and (3) encoding vectors and distances with a tiny amount of bits to align with hardware constraints may yield substantial benefits.

\squishend

The paper is organized as follows: Section \ref{sec: background} reviews related work and conducts a problem study on index construction of HNSW. Section \ref{sec: data processing} introduces the proposed novel compact coding strategy alongside the memory layout optimization for \method{HNSW} construction. Section \ref{sec: experiments} provides experimental results, while some notable findings and conclusion are outlined in Sections \ref{sec: summary} and \ref{sec: conclusion}, respectively.
\section{Background and Motivation}
\label{sec: background}
In this section, we first review current graph-based ANNS algorithms to ascertain the leadership of \method{HNSW}, then outline search and indexing optimizations based on \method{HNSW}. Next, we provide a detailed analysis of the \method{HNSW} construction process to identify the root causes of inefficiency on modern CPUs.

\subsection{Related Work}
\label{subsec: Related Work}

\subsubsection{\textbf{Graph-based ANNS Algorithms}}
\label{subsubsec: Graph ANNS Alg}
Graph-based ANNS algorithms have emerged as the leading approach for balancing search accuracy and efficiency in high-dimensional spaces \cite{graph_survey_vldb2021,DPG,DiskANN,tau-MG,NSSG,RoarGraph}. These methods construct a graph index where vertices represent database vectors, and directed edges signify neighbor relationships. During search, a greedy algorithm navigates the graph, visiting a vertex's neighbors and selecting the one closest to the query, continuing until no closer neighbor is found, yielding the final result.
Current methods share similar index structures and greedy search strategies but differ in their edge selection strategies during graph construction \cite{graph_survey_vldb2021}. For example, \method{HNSW} \cite{HNSW} and \method{NSG} \cite{NSG} use the Monotonic Relative Neighborhood Graph (MRNG) strategy, while \method{Vamana} \cite{DiskANN} and $\tau$-\method{MG} \cite{tau-MG} incorporate additional parameters. \method{HCNNG} \cite{HCNNG} employs the Minimum Spanning Tree (MST) approach, while \method{KGraph} \cite{NNDescent} and NGT \cite{NGT} implement the K-Nearest Neighbor Graph (KNNG) paradigm. {Despite these variations, state-of-the-art methods like \method{HNSW}, \method{NSG}, Vamana, and $\tau$-\method{MG} adhere to a similar construction framework, consisting of Candidate Acquisition (CA) and Neighbor Selection (NS) stages \cite{NSG,tau-MG}.} {Our research identifies the bottleneck in the CA and NS steps, where distance calculations dominate indexing time (see Section \ref{subsec: problem analysis}). Consequently, improving the CA and NS steps will accelerate the indexing process in all these graph algorithms.} Notably, \method{HNSW} stands out for its robust optimizations and versatile features \cite{Finger,ZuoQZLD24,LiuZHSLLDYW22}, including native \textit{add} support. Numerous studies have enhanced \method{HNSW} from various perspectives (we review them in the following paragraphs), solidifying its role as a mainstream method in both industry and academia.

\subsubsection{\textbf{Search Optimizations on \methodbt{HNSW}}}
\label{subsubsec: Search Opt HNSW}
Search optimizations for \method{HNSW} focus on several key components. Some aim to optimize entry point acquisition by leveraging additional quantization or hashing structures to start closer to the query \cite{HVS,zhao2023towards}. Others optimize neighbor visits by learning global \cite{LearnToRoute} or topological \cite{MunozDT19} information, enhancing data locality \cite{ColemanSSS22}, and partitioning neighbor areas \cite{TOGG}. These strategies improve accuracy and efficiency by identifying relevant neighbors and skipping irrelevant ones. Some approaches reduce distance computation complexity using approximate calculations \cite{ADSampling,Finger,yang2024bridging,lu2024probabilistic}, enhancing search efficiency with minimal accuracy loss. Additionally, optimizations refine termination conditions through adaptive decision-making \cite{LiZAH20,LiuZHSLLDYW22,yang2021tao} and relaxed monotonicity \cite{zhang2023vbase}.
Specialized hardware like GPUs \cite{SONG}, FPGAs \cite{jiang2024accelerating,ZengZLZDZLNXYW23,PengCWYWGLBSJLD21}, Compute Express Link (CXL) \cite{CXL-ANNS}, Non-Volatile Memory (NVM) \cite{HM_ANN}, NVMe SSDs \cite{Starling}, and SmartSSDs \cite{SmartSSD} accelerate distance computation and optimize the \method{HNSW} index layout, achieving superior performance through software-hardware collaboration. To support complex query processing, some approaches integrate structured predicates \cite{PatelKGZ24,NHQ} or range attributes \cite{ZuoQZLD24} into \method{HNSW}, enabling hybrid searches of vectors and attributes. These optimizations enhance HNSW's performance and versatility, meeting diverse real-world requirements, though most increase indexing time compared to the original \method{HNSW}.

\subsubsection{\textbf{Indexing Optimizations of \methodbt{HNSW}}}
\label{subsubsec: Index Opt HNSW}
Despite HNSW's superior search performance, its low index construction efficiency remains a major bottleneck, a challenge shared by other graph-based methods \cite{HVS,NSG,NSSG,DiskANN,zhao2023towards,aumuller2023recent}. As demands for dynamic updates \cite{XuLLXCZLYYYCY23,Fresh-DiskANN}, large-scale data \cite{DiskANN,SPANN}, and complex queries \cite{PatelKGZ24,ZuoQZLD24} grow (see Section \ref{sec: intro}), optimizing HNSW construction becomes increasingly important. Current research targets two areas: algorithm optimization and hardware acceleration.
Algorithm optimization efforts have redesigned the construction pipeline, either fully \cite{NSG,tau-MG} or partially \cite{HVS}, but these attempts offer only marginal improvements \cite{NSG,NSSG,zhao2023towards}. Moreover, these optimizations substantially modify the algorithmic process of \method{HNSW}, requiring code refactoring that may disrupt many engineering optimizations integrated over the years \cite{hnswlib,n2,Faiss}. Some HNSW features, like native \textit{add} support, are weakened \cite{HVS} or even discarded \cite{NSG}.
On the other hand, GPU-accelerated methods parallelize distance computations during \method{HNSW} construction \cite{WangZZY21,CAGRA,GGNN}. These methods re-implement \method{HNSW} to align with GPU architecture, achieving an order of magnitude speedup compared to single-thread CPU-based construction \cite{GANNS}. However, GPU-specific limitations reduce their practicality in real-world scenarios \cite{ZhangL0L024} (see Section \ref{sec: intro}). As CPU-based \method{HNSW} remains prevalent in many industrial applications \cite{Milvus_sigmod2021,ColemanSSS22,douze2024faiss,PASE}, accelerating \method{HNSW} construction on modern CPUs is crucial. {Note that parallel multi-thread implementations of graph indexing on CPUs are simple, efficient, and widely adopted by current graph indexes (e.g., \method{HNSW}, \method{NSG} \cite{NSG}, $\tau$-MG \cite{tau-MG}). These implementations are orthogonal to our work, as our techniques build upon multi-thread graph indexing.}

\subsubsection{\textbf{Distributed Deployment of \methodbt{HNSW}}}
\label{subsubsec: Distri Deploy HNSW}
Distributed deployment accelerates HNSW construction by partitioning large datasets into multiple shards, each containing tens of millions of vectors \cite{Manu_zilliz,lanns,IwabuchiSPPS23,deng2019pyramid}. This enables concurrent index construction across nodes, enhancing efficiency for large-scale datasets \cite{lanns}. Query processing can be optimized by an inter-shard coordinator \cite{Manu_zilliz}, and advanced data fragmentation strategies can assign queries to relevant shards \cite{ZhangYGWHLLZT23,ELPIS}. At the system level, distributed deployment ensures scaling, load balancing, and fault tolerance \cite{Starling}, but introduces challenges like high communication overhead \cite{MiaoZSNYTC21,Milvus_sigmod2021}. Notably, distributed deployment is orthogonal to our research, as we focus on efficient index building within a shard, which can be directly integrated into existing distributed systems.

\subsection{Problem Analysis}
\label{subsec: problem analysis}

\setlength{\textfloatsep}{0pt}
\begin{algorithm}[t]
\label{alg: hnsw construction}
  \caption{\textsc{Index Construction of \method{HNSW}}}
  \LinesNumbered
  \KwIn{a vector dataset $\boldsymbol{S}$, hyper-parameters $C$ and $R$ ($R\leq C$)}
  \KwOut{\method{HNSW} index built on $\boldsymbol{S}$}

  $\boldsymbol{V} \gets \emptyset$, $\boldsymbol{E} \gets \emptyset$; \Comment{\textsf{start from a empty graph}}

  \For(\Comment{\textsf{insert all vectors in $\boldsymbol{S}$}}){each $\boldsymbol{x}$ in $\boldsymbol{S}$}{
    $l_{max}$ $\gets$ $\boldsymbol{x}$'s max layer; \Comment{\textsf{exponential decaying distribution}}

    \For(\Comment{\textsf{insert $\boldsymbol{x}$ at layer $l$}}){each $l$ $\in$ $\{l_{max}, \cdots, 0\}$}{
      $\boldsymbol{C(x)}$ $\gets$ top-$C$ candidates; \Comment{\textsf{greedy search}}

      $\boldsymbol{N(x)}$ $\gets$ $\leq R$ neighbors from $\boldsymbol{C(x)}$; \Comment{\textsf{heuristic strategy}}

      add $\boldsymbol{x}$ to $\boldsymbol{N(y)}$ for each $\boldsymbol{y}\in \boldsymbol{N(x)}$; \Comment{\textsf{reverse edge}}
    }

    $\boldsymbol{V}\gets \boldsymbol{V}\cup \{\boldsymbol{x}\}$, $\boldsymbol{E}\gets \boldsymbol{E} \cup \boldsymbol{N(x)}$
    
  }
  
  \textbf{return} $\boldsymbol{G}=(\boldsymbol{V},\boldsymbol{E})$
\end{algorithm}

\noindent\underline{Notations.} In the HNSW index, each vertex corresponds to a unique vector, denoted by bold lowercase letters (e.g., $\boldsymbol{x}$, $\boldsymbol{y}$). We use the Euclidean distance $\delta(\boldsymbol{x}, \boldsymbol{y})$ as the distance metric for quantifying similarity, where a smaller $\delta(\boldsymbol{x}, \boldsymbol{y})$ indicates greater similarity \cite{NSG,graph_survey_vldb2021,zhao2023towards,tau-MG}. Bold uppercase letters (e.g., $\boldsymbol{S}$) denote sets, while unbolded uppercase and lowercase letters represent parameters.

Given a vector dataset $\boldsymbol{S}$, the \method{HNSW} index $\boldsymbol{G}$ is built by progressively inserting vectors into the current graph index, as outlined in Algorithm \hyperref[alg: hnsw construction]{1}. Two hyper-parameters must be set before index construction: the maximum number of candidates ($C$) and the maximum number of neighbors ($R$)\footnote{In the original paper and many popular repositories \cite{n2,hnswlib,HNSW}, $C$ is called $efConstruction$, and $R$ in the base layer is twice that in higher layers \cite{HNSW}.}. For each inserted vector $\boldsymbol{x}$, the maximum layer $l_{max}$ is randomly selected following an exponentially decaying distribution (line 3). The vector is inserted into layers from $l_{max}$ to $0$, with the same procedure applied at each layer (lines 4-7). At layer $l$, $\boldsymbol{x}$ is treated as a query point, and a greedy search on the current graph at layer $l$ identifies the top-$C$ nearest vertices as candidates (line 5). During the search, a candidate set $\boldsymbol{C(x)}$ of size $C$ is maintained, with vertices ordered in ascending distance to $\boldsymbol{x}$, and the maximum distance is denoted as $T$. The search iteratively visits a vertex's neighbors, calculates their distances to $\boldsymbol{x}$, and updates $\boldsymbol{C(x)}$ with neighbors that have smaller distances ($<T$). To obtain the final neighbors, a heuristic edge selection strategy prevents clustering among neighbors (line 6). For example, for candidate $\boldsymbol{v}$ and any candidate $\boldsymbol{u}$ where $\delta(\boldsymbol{u},\boldsymbol{x}) < \delta(\boldsymbol{v},\boldsymbol{x})$, if $\delta(\boldsymbol{u},\boldsymbol{v}) < \delta(\boldsymbol{v},\boldsymbol{x})$, $\boldsymbol{v}$ is excluded from $\boldsymbol{N(x)}$. After determining $\boldsymbol{x}$'s final neighbors, $\boldsymbol{x}$ is added to the neighbor list $\boldsymbol{N(y)}$ for each $\boldsymbol{y} \in \boldsymbol{N(x)}$ (line 7). If $\boldsymbol{N(y)}$ exceeds $R$, the heuristic edge selection strategy prunes excess neighbors, ensuring that the neighbor count remains under $R$.

The construction process of \method{HNSW} involves two main steps: \textit{Candidate Acquisition} (CA) and \textit{Neighbor Selection} (NS). The CA step identifies the top-$C$ nearest vertices for an inserted vector using a greedy search strategy. This involves visiting a vertex's neighbor list and computing distances to the inserted vector, requiring random memory access to fetch a neighbor's vector data, stored separately from neighbor IDs due to the large vector size. In the NS step, the final $R$ neighbors are selected from the candidate set using a heuristic strategy. Here, distances among candidates are computed, also requiring random access to their vector data. In both CA and NS, distance calculations are used solely for comparison. In CA, distances are compared to the largest distance in the candidate set to determine if a visiting vertex can replace the farthest candidate. In NS, distances among candidates and between candidates and the inserted vector are compared to decide the inclusion of a candidate in the final neighbor set. HNSW currently performs all distance computations on full-precision vectors. Additionally, to utilize SIMD acceleration, a distance computation requires hundreds of read operations to load vector segments into a 128-bit register due to a large vector size, often spanning thousands of bytes.

\begin{figure}
  \centering
  \setlength{\abovecaptionskip}{0.1cm}
  \setlength{\belowcaptionskip}{0.1cm}
  \includegraphics[width=\linewidth]{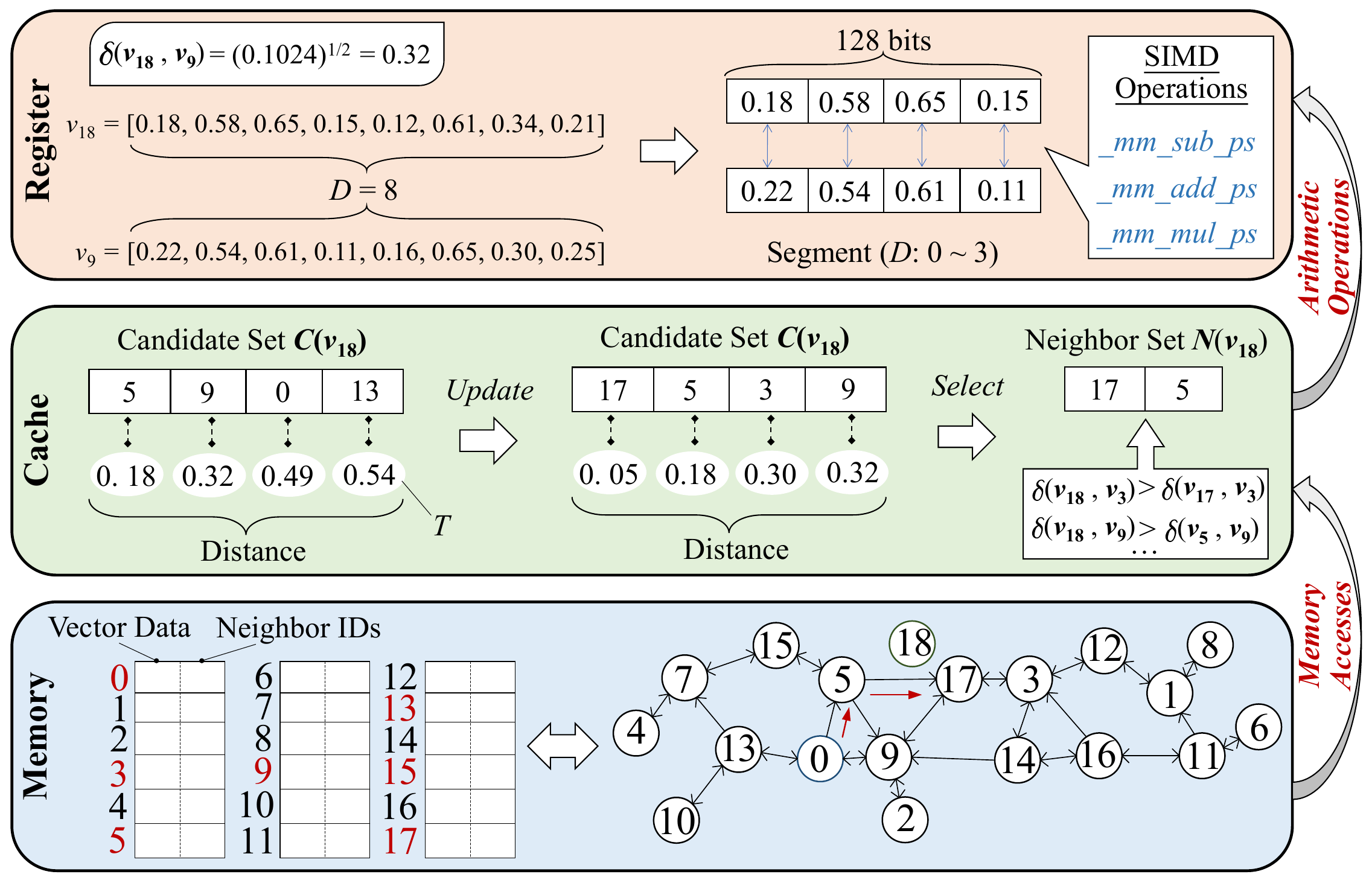}
  \caption{Illustrating memory accesses and arithmetic operations during the \method{HNSW} construction process (base layer).}
  \label{fig:problem ananlysis}
\end{figure}

\begin{myExa}
    \rm In Figure \ref{fig:problem ananlysis}, we illustrate the index construction process of \method{HNSW} at the base layer, exemplified by inserting $\boldsymbol{v_{18}}$ with $C=4$ and $R=2$. In the CA step, $\boldsymbol{v_{18}}$ is treated as a query point, and a greedy search from $\boldsymbol{v_0}$ is initiated. After examining $\boldsymbol{v_0}$'s neighbors, the candidate set $\boldsymbol{C(v_{18})}$ is $\{\boldsymbol{v_5}, \boldsymbol{v_9}, \boldsymbol{v_0}, \boldsymbol{v_{13}} \}$, ordered by proximity to $\boldsymbol{v_{18}}$. The current distance threshold $T$ is $\delta(\boldsymbol{v_{13}}, \boldsymbol{v_{18}})=0.54$. The nearest vertex, $\boldsymbol{v_5}$, is then visited, prompting an exploration of its neighbors to update $\boldsymbol{C(v_{18})}$. Upon calculating $\delta(\boldsymbol{v_{15}}, \boldsymbol{v_{18}})$ and comparing it with $T$, $\boldsymbol{v_{13}}$ is replaced by $\boldsymbol{v_{15}}$ as $\delta( \boldsymbol{v_{15}}, \boldsymbol{v_{18}}) < T$, and $T$ is updated to $\delta(\boldsymbol{v_{0}}, \boldsymbol{v_{18}})=0.49$. This process continues, resulting in the replacement of $\boldsymbol{v_0}$ and $\boldsymbol{v_{15}}$ with $\boldsymbol{v_{17}}$ and $\boldsymbol{v_{3}}$, respectively. At the end of CA, $\boldsymbol{C(v_{18})}$ comprises $\{\boldsymbol{v_{17}}, \boldsymbol{v_5}, \boldsymbol{v_3}, \boldsymbol{v_{9}} \}$. The vertices $\boldsymbol{v_0}$, $\boldsymbol{v_3}$, $\boldsymbol{v_5}$, $\boldsymbol{v_9}$, $\boldsymbol{v_{13}}$, $\boldsymbol{v_{15}}$, and $\boldsymbol{v_{17}}$ are traversed in this phase. As shown in the graph index's memory layout (lower left), these vertices (highlighted in red) exhibit a random distribution, meaning numerous random memory accesses.
    In the NS step, $\boldsymbol{v_{17}}$ is initially chosen, leading to the removal of $\boldsymbol{v_3}$ and $\boldsymbol{v_9}$ from $\boldsymbol{C(v_{18})}$ based on the conditions $\delta(\boldsymbol{v_{17}},\boldsymbol{v_3}) < \delta(\boldsymbol{v_{3}},\boldsymbol{v_{18}})$ and $\delta(\boldsymbol{v_{17}},\boldsymbol{v_9}) < \delta(\boldsymbol{v_{9}},\boldsymbol{v_{18}})$. The remaining vertex, $\boldsymbol{v_5}$, is added to the neighbor list, which reaches the limit $R=2$, yielding $\{ \boldsymbol{v_{17}}, \boldsymbol{v_5}\}$. $\boldsymbol{v_{18}}$ is then appended to the neighbor lists of $\boldsymbol{v_{17}}$ and $\boldsymbol{v_5}$, while ensuring that their neighbor counts remain below $R$ through the heuristic strategy. In practice, $\boldsymbol{C(v_{18})}$ can include thousands of candidates, making full vector caching impractical and necessitating many random memory accesses for distance computations during the NS stage. To leverage SIMD instructions, a vector is divided into multiple segments, each comprising four dimensions suitable for loading into a 128-bit register. While SIMD enhances processing speed, the necessity of numerous register load operations to sequentially fetch each vector segment leads to reduced SIMD utilization efficiency.
\end{myExa}

Our analysis reveals that the primary inefficiency in HNSW index construction lies in the current distance computation process, consuming over 90\% of indexing time (Figure \ref{fig: indexing profile}). This process suffers from numerous random memory accesses and suboptimal SIMD utilization, misaligning with modern CPU architecture. Thus, optimizing distance computation to better exploit modern CPU capabilities is essential for accelerating HNSW construction.
\section{Methodology}
\label{sec: data processing}
In this section, we theoretically analyze how distance comparisons in \method{HNSW} construction can be effectively executed using compact vector codes. Based on this, we integrate three common vector compression methods—Product Quantization (PQ) \cite{PQ}, Scalar Quantization (SQ) \cite{LVQ}, and Principal Component Analysis (PCA) \cite{PCA}—into \method{HNSW}. While many variants of these methods exist \cite{OPQ, RaBitQ, WangLKC16, Distill-VQ, ZhangDW14, HeNB21, ScaNN, DeltaPQ}, we exclude them due to their complexity compared to the original methods, which complicates deployment in HNSW. We focus on these three methods for their lightweight nature and alignment with our goal of accelerating index construction. Drawing from the integration of these methods in HNSW construction, we design a novel compact coding strategy and optimize the memory layout for HNSW construction, enhancing memory access efficiency and SIMD operations.

\subsection{Theoretical Analysis}
Recall that distance computation during HNSW construction is a major bottleneck, with one distance value primarily compared against another. We denote distance comparisons for updating the candidate set as DC1, and for selecting final neighbors as DC2. Here, we demonstrate that DC1 and DC2 can be unified into a generalized framework and analyze how these comparisons can be effectively performed using compact vector codes.

\begin{myLemma}
\label{lemma: dist comp}
Given any three vertices $\boldsymbol{u}$, $\boldsymbol{v}$, and $\boldsymbol{w}$ in $D$-dimensional Euclidean space $\mathbb{R}^D$, the comparison between $\delta(\boldsymbol{u}, \boldsymbol{v})$ and $\delta(\boldsymbol{u}, \boldsymbol{w})$ is:

\noindent $\bullet$ $\delta(\boldsymbol{u}, \boldsymbol{v}) < \delta(\boldsymbol{u}, \boldsymbol{w})$ if and only if $\boldsymbol{e} \cdot \boldsymbol{u} - b < 0$;

\noindent $\bullet$ $\delta(\boldsymbol{u}, \boldsymbol{v}) > \delta(\boldsymbol{u}, \boldsymbol{w})$ if and only if $\boldsymbol{e} \cdot \boldsymbol{u} - b > 0$;

\noindent $\bullet$ $\delta(\boldsymbol{u}, \boldsymbol{v}) = \delta(\boldsymbol{u}, \boldsymbol{w})$ if and only if $\boldsymbol{e} \cdot \boldsymbol{u} - b = 0$;

\noindent where $\boldsymbol{e} \cdot \boldsymbol{u} - b = 0$ defines the perpendicular bisector hyperplane between $\boldsymbol{v}$ and $\boldsymbol{w}$, with $\boldsymbol{e}=\boldsymbol{w} - \boldsymbol{v}$ and $b = \frac{\|\boldsymbol{w}\|^2 - \|\boldsymbol{v}\|^2}{2}$.
\end{myLemma}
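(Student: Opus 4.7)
The plan is to reduce the comparison of two Euclidean distances to the sign of a single affine functional in $\boldsymbol{u}$, by working with squared distances so that the quadratic $\|\boldsymbol{u}\|^2$ term cancels. Since $\delta(\boldsymbol{u},\cdot)\geq 0$, the sign of $\delta(\boldsymbol{u},\boldsymbol{v})-\delta(\boldsymbol{u},\boldsymbol{w})$ agrees with the sign of $\delta(\boldsymbol{u},\boldsymbol{v})^2-\delta(\boldsymbol{u},\boldsymbol{w})^2$, so it suffices to analyze the latter. This monotonicity step is the only place where the non-negativity of Euclidean distance is used, and it is what makes the whole reduction go through cleanly.

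First I would expand $\|\boldsymbol{u}-\boldsymbol{v}\|^2-\|\boldsymbol{u}-\boldsymbol{w}\|^2$ using the identity $\|\boldsymbol{a}-\boldsymbol{b}\|^2=\|\boldsymbol{a}\|^2-2\boldsymbol{a}\cdot\boldsymbol{b}+\|\boldsymbol{b}\|^2$. The two $\|\boldsymbol{u}\|^2$ contributions cancel, leaving $2\boldsymbol{u}\cdot(\boldsymbol{w}-\boldsymbol{v})-(\|\boldsymbol{w}\|^2-\|\boldsymbol{v}\|^2)$. Substituting the definitions $\boldsymbol{e}=\boldsymbol{w}-\boldsymbol{v}$ and $b=\tfrac{1}{2}(\|\boldsymbol{w}\|^2-\|\boldsymbol{v}\|^2)$ collapses this to $2(\boldsymbol{e}\cdot\boldsymbol{u}-b)$. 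Combined with the first step, this yields
\[
\operatorname{sign}\bigl(\delta(\boldsymbol{u},\boldsymbol{v})-\delta(\boldsymbol{u},\boldsymbol{w})\bigr)=\operatorname{sign}\bigl(\boldsymbol{e}\cdot\boldsymbol{u}-b\bigr),
\]
from which the three ``if and only if'' assertions (strict less-than, strict greater-than, and equality) all follow by reading off the three possible signs.

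Finally, to justify the closing sentence about the perpendicular bisector hyperplane, I would verify the geometric identification. The set $\{\boldsymbol{u}\in\mathbb{R}^D:\delta(\boldsymbol{u},\boldsymbol{v})=\delta(\boldsymbol{u},\boldsymbol{w})\}$ is by definition the bisector of $\boldsymbol{v}$ and $\boldsymbol{w}$, and the computation above already shows this set equals $\{\boldsymbol{u}:\boldsymbol{e}\cdot\boldsymbol{u}-b=0\}$; its normal direction $\boldsymbol{e}=\boldsymbol{w}-\boldsymbol{v}$ is orthogonal to the segment joining $\boldsymbol{v}$ and $\boldsymbol{w}$, and plugging in the midpoint $\tfrac{1}{2}(\boldsymbol{v}+\boldsymbol{w})$ gives exactly $b$, confirming that the hyperplane passes through it. There is no real obstacle here; the entire argument is a one-line algebraic identity, and the only subtlety worth flagging in the write-up is the use of non-negativity to pass between distances and squared distances, which is what licenses the sign equivalence rather than merely an inequality in one direction.
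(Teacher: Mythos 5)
Your proposal is correct and follows essentially the same route as the paper's proof: square both distances, expand, cancel the $\|\boldsymbol{u}\|^2$ terms, and read the comparison off as the sign of $\boldsymbol{e}\cdot\boldsymbol{u}-b$. The only additions you make---explicitly flagging that squaring preserves order because distances are non-negative, and verifying that the hyperplane passes through the midpoint $\tfrac{1}{2}(\boldsymbol{v}+\boldsymbol{w})$---are reasonable elaborations the paper leaves implicit, but they do not change the underlying argument.
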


\begin{proof}
To show $\delta(\boldsymbol{u}, \boldsymbol{v}) = \delta(\boldsymbol{u}, \boldsymbol{w})$, we square both sides to eliminate square roots, giving $\|\boldsymbol{u} - \boldsymbol{v} \|^2 = \| \boldsymbol{u} - \boldsymbol{w} \|^2$. Expanding and simplifying, we get $2\boldsymbol{u} \cdot (\boldsymbol{w} - \boldsymbol{v}) = \|\boldsymbol{w}\|^2 - \|\boldsymbol{v}\|^2$.
Defining $\boldsymbol{e} = \boldsymbol{w} - \boldsymbol{v}$ and $b = \frac{\|\boldsymbol{w}\|^2 - \|\boldsymbol{v}\|^2}{2}$, this simplifies to $\boldsymbol{e} \cdot \boldsymbol{u} = b$, the equation of a hyperplane in $\mathbb{R}^D$. For $\boldsymbol{u}$ to lie on this hyperplane, it must satisfy $\boldsymbol{e} \cdot \boldsymbol{u} - b = 0$. Similarly, for \(\delta(\boldsymbol{u}, \boldsymbol{v}) > \delta(\boldsymbol{u}, \boldsymbol{w})\), squaring both sides gives $\|\boldsymbol{u} - \boldsymbol{v}\|^2 > \|\boldsymbol{u} - \boldsymbol{w}\|^2$, leading to $2\boldsymbol{u} \cdot (\boldsymbol{w} - \boldsymbol{v}) > \|\boldsymbol{w}\|^2 - \|\boldsymbol{v}\|^2$, and thus $\boldsymbol{e} \cdot \boldsymbol{u} - b > 0$. For \(\delta(\boldsymbol{u}, \boldsymbol{v}) < \delta(\boldsymbol{u}, \boldsymbol{w})\), a similar process gives $\boldsymbol{e} \cdot \boldsymbol{u} - b < 0$.
\end{proof}

In DC1, with $\boldsymbol{u}$ as the newly inserted vector and $\boldsymbol{v}$ as the farthest vertex from $\boldsymbol{u}$ in the candidate set $\boldsymbol{C(u)}$, we update $\boldsymbol{C(u)}$ by including $\boldsymbol{w}$ if $\boldsymbol{e} \cdot \boldsymbol{u} - b > 0$. In DC2, with $\boldsymbol{v}$ as the newly inserted vector and $\boldsymbol{w}$ as a selected neighbor in $\boldsymbol{N(v)}$, we exclude a candidate $\boldsymbol{u}$ from consideration as a neighbor if $\boldsymbol{e} \cdot \boldsymbol{u} - b > 0$. Using $\boldsymbol{e} \cdot \boldsymbol{u} - b > 0$ as a criterion, both DC1 and DC2 can be correctly executed.

\begin{myTheorem}[\cite{LVQ}]
\label{the: Dist. Comp.}
{
For three vertices $\boldsymbol{u}$, $\boldsymbol{v}$, and $\boldsymbol{w}$ in $D$-dimensional Euclidean space $\mathbb{R}^D$, with compact vector codes $\boldsymbol{u}^{\prime}$, $\boldsymbol{v}^{\prime}$, and $\boldsymbol{w}^{\prime}$, the condition $\delta(\boldsymbol{u}, \boldsymbol{v}) > \delta(\boldsymbol{u}, \boldsymbol{w})$ is equivalent to $\delta(\boldsymbol{u}^{\prime}, \boldsymbol{v}^{\prime}) > \delta(\boldsymbol{u}^{\prime}, \boldsymbol{w}^{\prime})$ when $|\boldsymbol{e} \cdot \boldsymbol{u} - b|\geq |E|$. The quantity $E$ is defined as
\begin{equation}
\begin{aligned}
    E = &(E_{\boldsymbol{w}}-E_{\boldsymbol{v}}) \cdot \boldsymbol{u} + (\boldsymbol{w}-\boldsymbol{v})\cdot E_{\boldsymbol{u}} + E_{\boldsymbol{v}} \cdot E_{\boldsymbol{u}} - E_{\boldsymbol{w}}\cdot E_{\boldsymbol{u}}\\
    &+ \frac{1}{2}||E_{\boldsymbol{w}}||^2-\frac{1}{2}||E_{\boldsymbol{v}}||^2 +\boldsymbol{v}\cdot E_{\boldsymbol{v}}-\boldsymbol{w}\cdot E_{\boldsymbol{w}}
\end{aligned}
\end{equation}
where $E_{\boldsymbol{u}}$, $E_{\boldsymbol{v}}$, and $E_{\boldsymbol{w}}$ are the error vectors corresponding to $\boldsymbol{u}$, $\boldsymbol{v}$, and $\boldsymbol{w}$, respectively.
}

\end{myTheorem}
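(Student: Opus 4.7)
The plan is to leverage Lemma 1 twice and reduce the equivalence of two distance-comparison inequalities to a sign-preservation statement for a single scalar quantity under an additive perturbation. By Lemma 1 applied to $(\boldsymbol{u},\boldsymbol{v},\boldsymbol{w})$, the inequality $\delta(\boldsymbol{u},\boldsymbol{v}) > \delta(\boldsymbol{u},\boldsymbol{w})$ is equivalent to $\boldsymbol{e}\cdot\boldsymbol{u}-b > 0$, where $\boldsymbol{e}=\boldsymbol{w}-\boldsymbol{v}$ and $b=(\|\boldsymbol{w}\|^2-\|\boldsymbol{v}\|^2)/2$. Applying the same lemma to the compressed triple $(\boldsymbol{u}',\boldsymbol{v}',\boldsymbol{w}')$ yields an analogous criterion $\boldsymbol{e}'\cdot\boldsymbol{u}'-b' > 0$, where $\boldsymbol{e}'=\boldsymbol{w}'-\boldsymbol{v}'$ and $b'=(\|\boldsymbol{w}'\|^2-\|\boldsymbol{v}'\|^2)/2$. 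The whole theorem therefore reduces to showing that $\boldsymbol{e}\cdot\boldsymbol{u}-b$ and $\boldsymbol{e}'\cdot\boldsymbol{u}'-b'$ share a common sign whenever the prescribed gap condition holds.

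Next I would substitute the decomposition $\boldsymbol{u}'=\boldsymbol{u}+E_{\boldsymbol{u}}$, $\boldsymbol{v}'=\boldsymbol{v}+E_{\boldsymbol{v}}$, $\boldsymbol{w}'=\boldsymbol{w}+E_{\boldsymbol{w}}$ (adjusted as needed to match the sign convention of the theorem) into $\boldsymbol{e}'\cdot\boldsymbol{u}'-b'$ and expand. The first sub-expansion, on $\boldsymbol{e}'\cdot\boldsymbol{u}'$, produces the baseline term $(\boldsymbol{w}-\boldsymbol{v})\cdot\boldsymbol{u}$ together with three cross terms involving one or two error vectors; the second sub-expansion, on $b'$, breaks each squared norm $\|\boldsymbol{w}+E_{\boldsymbol{w}}\|^2$ into the baseline norm, a linear $\boldsymbol{w}\cdot E_{\boldsymbol{w}}$ piece, and a pure-error $\|E_{\boldsymbol{w}}\|^2$ piece (and analogously for $\boldsymbol{v}$). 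Collecting everything yields the identity $\boldsymbol{e}'\cdot\boldsymbol{u}'-b' = (\boldsymbol{e}\cdot\boldsymbol{u}-b) + E$, where the residual $E$ aggregates exactly the eight perturbation terms listed in the theorem statement.

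The conclusion then follows from a reverse-triangle-inequality argument applied to this additive decomposition: under the hypothesis $|\boldsymbol{e}\cdot\boldsymbol{u}-b|\geq|E|$, adding $E$ cannot flip the sign of $\boldsymbol{e}\cdot\boldsymbol{u}-b$, so $\boldsymbol{e}'\cdot\boldsymbol{u}'-b'$ and $\boldsymbol{e}\cdot\boldsymbol{u}-b$ are simultaneously positive (with the boundary case $|\boldsymbol{e}\cdot\boldsymbol{u}-b|=|E|$ handled by inspection). Combined with the two applications of Lemma 1, this delivers the claimed equivalence between $\delta(\boldsymbol{u},\boldsymbol{v})>\delta(\boldsymbol{u},\boldsymbol{w})$ and $\delta(\boldsymbol{u}',\boldsymbol{v}')>\delta(\boldsymbol{u}',\boldsymbol{w}')$.

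The main obstacle I expect is purely bookkeeping in the expansion step: matching the residual against the specific formula for $E$ requires fixing a consistent sign convention for the error vectors and carefully grouping the cross terms arising from both the inner-product and the squared-norm expansions, so a single misplaced sign will obscure the correspondence. The Lemma 1 reduction and the reverse-triangle step are otherwise routine.
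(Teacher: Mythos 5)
Your proposal matches the paper's proof: both apply Lemma~\ref{lemma: dist comp} to the original and compressed triples, substitute the error decomposition, expand and collect the perturbation terms into the residual $E$, and conclude by a sign-preservation argument under $|\boldsymbol{e}\cdot\boldsymbol{u}-b|\geq|E|$. The only cosmetic difference is the sign convention — the paper defines $E_{\boldsymbol{u}}=\boldsymbol{u}-\boldsymbol{u}'$, yielding $\boldsymbol{e}'\cdot\boldsymbol{u}'-b'=(\boldsymbol{e}\cdot\boldsymbol{u}-b)-E$ rather than $+E$ — which you already flag as a convention to be fixed.
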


\begin{proof}
    Based on Lemma \ref{lemma: dist comp}, the condition $\delta(\boldsymbol{u}^{\prime}, \boldsymbol{v}^{\prime}) > \delta(\boldsymbol{u}^{\prime}, \boldsymbol{w}^{\prime})$ is equivalent to $\boldsymbol{e}^{\prime} \cdot \boldsymbol{u}^{\prime} - b^{\prime} > 0$, where $\boldsymbol{e}^{\prime}=\boldsymbol{w}^{\prime}-\boldsymbol{v}^{\prime}$ and $b^{\prime}=\frac{\|\boldsymbol{w}^{\prime}\|^2 - \|\boldsymbol{v}^{\prime}\|^2}{2}$. To demonstrate that $\delta(\boldsymbol{u}, \boldsymbol{v}) > \delta(\boldsymbol{u}, \boldsymbol{w})$ is equivalent to $\delta(\boldsymbol{u}^{\prime}, \boldsymbol{v}^{\prime}) > \delta(\boldsymbol{u}^{\prime}, \boldsymbol{w}^{\prime})$, it suffices to prove that $\boldsymbol{e} \cdot \boldsymbol{u} - b$ and $\boldsymbol{e}^{\prime} \cdot \boldsymbol{u}^{\prime} - b^{\prime}$ share the same sign.
Substituting $\boldsymbol{e}^{\prime}$ and $b^{\prime}$ into $\boldsymbol{e}^{\prime} \cdot \boldsymbol{u}^{\prime} - b^{\prime}$, we obtain:
\begin{equation}
\begin{aligned}
    \boldsymbol{e}^{\prime} \cdot \boldsymbol{u}^{\prime} - b^{\prime} = (\boldsymbol{w}^{\prime}-\boldsymbol{v}^{\prime}) \cdot \boldsymbol{u}^{\prime} - \frac{\|\boldsymbol{w}^{\prime}\|^2 - \|\boldsymbol{v}^{\prime}\|^2}{2}
\end{aligned}
\end{equation}
For vector $\boldsymbol{u}$, let $E_{\boldsymbol{u}}$ denote the compression error of $\boldsymbol{u}$, defined as
\begin{equation}
\label{equ: error}
    E_{\boldsymbol{u}} = \boldsymbol{u} - \boldsymbol{u}^{\prime}
\end{equation}
Substituting this into the previous equation, we derive:
\begin{equation}
\label{equ: error1}
\begin{aligned}
    \boldsymbol{e}^{\prime} \cdot \boldsymbol{u}^{\prime} - b^{\prime}
    = &((\boldsymbol{w}-E_{\boldsymbol{w}})-(\boldsymbol{v}-E_{\boldsymbol{v}})) \cdot (\boldsymbol{u}-E_{\boldsymbol{u}}) \\
    &- \frac{\|\boldsymbol{w}-E_{\boldsymbol{w}}\|^2 - \|\boldsymbol{v}-E_{\boldsymbol{v}}\|^2}{2}
\end{aligned}
\end{equation}
Expanding and simplifying, we group terms involving compression errors:
\begin{equation}
\begin{aligned}
    E = &(E_{\boldsymbol{w}}-E_{\boldsymbol{v}}) \cdot \boldsymbol{u} + (\boldsymbol{w}-\boldsymbol{v})\cdot E_{\boldsymbol{u}} + E_{\boldsymbol{v}} \cdot E_{\boldsymbol{u}} - E_{\boldsymbol{w}}\cdot E_{\boldsymbol{u}}\\
    &+ \frac{1}{2}||E_{\boldsymbol{w}}||^2-\frac{1}{2}||E_{\boldsymbol{v}}||^2 +\boldsymbol{v}\cdot E_{\boldsymbol{v}}-\boldsymbol{w}\cdot E_{\boldsymbol{w}}
\end{aligned}
\end{equation}
Thus, the equation simplifies to:
\begin{equation}
\label{equ: error2}
\begin{aligned}
    \boldsymbol{e}^{\prime} \cdot \boldsymbol{u}^{\prime} - b^{\prime}
    &= (\boldsymbol{w}-\boldsymbol{v}) \cdot \boldsymbol{u} -\frac{1}{2} (||\boldsymbol{w}||^2-||\boldsymbol{v}||^2) - E \\
    &=\boldsymbol{e}\cdot \boldsymbol{u}-b - E
\end{aligned}
\end{equation}
Therefore, when $|\boldsymbol{e} \cdot \boldsymbol{u} - b|\geq |E|$, the sign of $\boldsymbol{e}^{\prime} \cdot \boldsymbol{u}^{\prime} - b^{\prime}$ matches that of $\boldsymbol{e} \cdot \boldsymbol{u}-b$. Specifically, $\boldsymbol{e}^{\prime} \cdot \boldsymbol{u}^{\prime} - b^{\prime}>0$ if and only if $\boldsymbol{e} \cdot \boldsymbol{u}-b>0$, and similarly, $\boldsymbol{e}^{\prime} \cdot \boldsymbol{u}^{\prime} - b^{\prime}<0$ if and only if $\boldsymbol{e} \cdot \boldsymbol{u}-b<0$.
\end{proof}

Theorem \ref{the: Dist. Comp.} elucidates how distance comparison is preserved despite compression error. By adjusting the parameters of compact coding (thus affecting $E$), the condition $|\boldsymbol{e} \cdot \boldsymbol{u} - b|\geq |E|$ can always be satisfied. Consequently, an appropriate compression error maintains accurate distance comparison while reducing vector data size, enhancing memory access and SIMD operations. Hence, integrating vector compression techniques into the HNSW construction process is a promising avenue for accelerating index construction.
{
Given a vector $\boldsymbol{u}$ and its compact vector code $\boldsymbol{u}^{\prime}$, the error vector is computed as $E_{\boldsymbol{u}} = \boldsymbol{u} - \boldsymbol{u}^{\prime}$. Here, $\boldsymbol{u}^{\prime}$ refers to the vector derived from the compact vector code, not the vector code itself; this derived vector retains the same dimensionality as $\boldsymbol{u}$.
For different compact coding methods, the derived vectors have different interpretations, and thus $E_{\boldsymbol{u}}$ is computed in distinct ways (see Section \ref{subsec: baseline solutions}).
}

{
In a compact coding method, adjustable parameters control the compression error. Specifically, a random sample of vectors (e.g., 10,000) is drawn from the dataset, and each vector's top-100 nearest neighbors are determined, forming a triple $(\boldsymbol{u}$, $\boldsymbol{v}$, $\boldsymbol{w})$ from the vector and its two nearest neighbors. A batch of such triples is then constructed, followed by the generation of compact vector codes $\boldsymbol{u}^{\prime}$, $\boldsymbol{v}^{\prime}$, and $\boldsymbol{w}^{\prime}$ for specific parameters. Next, the quantities $|\boldsymbol{e} \cdot \boldsymbol{u} - b|$ and $|E|$ are computed for each triple. By tuning the parameters, one can maximize the proportion of triples that satisfy $|\boldsymbol{e} \cdot \boldsymbol{u} - b| \geq |E|$ while minimizing the vector size. Finally, the chosen parameters are evaluated by integrating the compact coding method into HNSW.
}

\subsection{Baseline Solutions}
\label{subsec: baseline solutions}

\subsubsection{\textbf{Deploying PQ in HNSW}}
\label{subsubsec: HNSW PQ}
Product Quantization (PQ) \cite{PQ} splits high-dimensional vectors into subvectors that are independently compressed. For each subspace, a codebook of centroids is generated, and during encoding, the ID of the nearest centroid is selected to create a compact representation. The trade-off between distance comparison accuracy and compression error is managed by adjusting the code length ($L_{PQ}$) in each subspace and the number of subspaces ($M_{PQ}$). {For example, with $M_{PQ} = 8$ and $L_{PQ} = 4$, a vector is represented by 8 codewords, each storing a 4-bit centroid ID in the subspace. For a vector $\boldsymbol{u}$ and its compact code $\boldsymbol{u}^{\prime}$, the derived vector is formed by concatenating $\boldsymbol{u}$'s nearest centroid vectors (identified by $\boldsymbol{u}^{\prime}$) from all subspaces. The error vector $E_{\boldsymbol{u}}$ is then computed as the difference between $\boldsymbol{u}$ and this derived vector.}

To incorporate PQ into HNSW, we preprocess high-dimensional vectors to create codebooks. Each inserted vector is encoded using these codebooks, and a distance table is computed between the vector and subspace centroids (Asymmetric Distance Computation, ADC \cite{PQ}). In the Candidate Acquisition (CA) stage, distances to visited vertices are efficiently computed by scanning this table. In the Neighbor Selection (NS) stage, the table cannot compute candidate distances. Thus, centroids are located based on compact PQ codes, and inter-centroid distances represent candidate distances (Symmetric Distance Computation, SDC \cite{PQ}). Theorem \ref{the: Dist. Comp.} suggests that optimal values of $L_{PQ}$ and $M_{PQ}$ balance distance comparison accuracy with construction efficiency. We denote the HNSW constructed via this pipeline as HNSW-PQ, which replaces original vectors with PQ codes and employs ADC and SDC for efficient distance computation during index construction.

We evaluate HNSW-PQ under different $L_{PQ}$ and $M_{PQ}$ configurations, as 
shown in Figure \ref{fig: HNSW-PQ laion1m}. The results indicate that both parameters significantly influence indexing time and search performance. As $L_{PQ}$ grows, more clusters are formed in each subspace, leading to longer codebook generation and vector encoding times, thus increasing indexing time. Notably, indexing time initially decreases before rising with $M_{PQ}$. This occurs because a small $M_{PQ}$ causes high subspace dimensionality, increasing codebook generation time. Conversely, a larger $M_{PQ}$ also raises indexing time due to more subspaces. Regarding index quality, both higher $L_{PQ}$ and $M_{PQ}$ improve recall, aligning with the theoretical analysis that lower compression errors yield more accurate distance comparisons.

\begin{figure}
% \vspace{-0.2cm}
\setlength{\abovecaptionskip}{0cm}
  \setlength{\belowcaptionskip}{0cm}
  \centering
  \footnotesize
  \stackunder[0.5pt]{\includegraphics[scale=0.2]{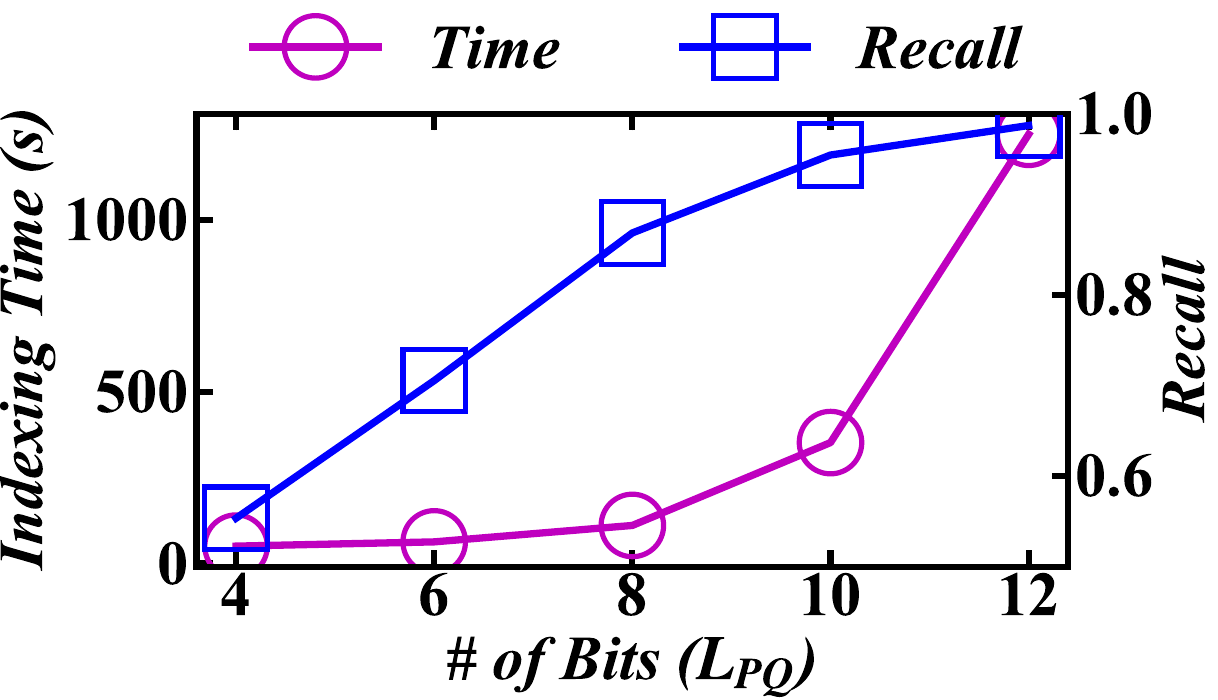}}{(a) LAION-1M ($M_{PQ}=8$)}
  \hspace{0.15cm}
  \stackunder[0.5pt]{\includegraphics[scale=0.2]{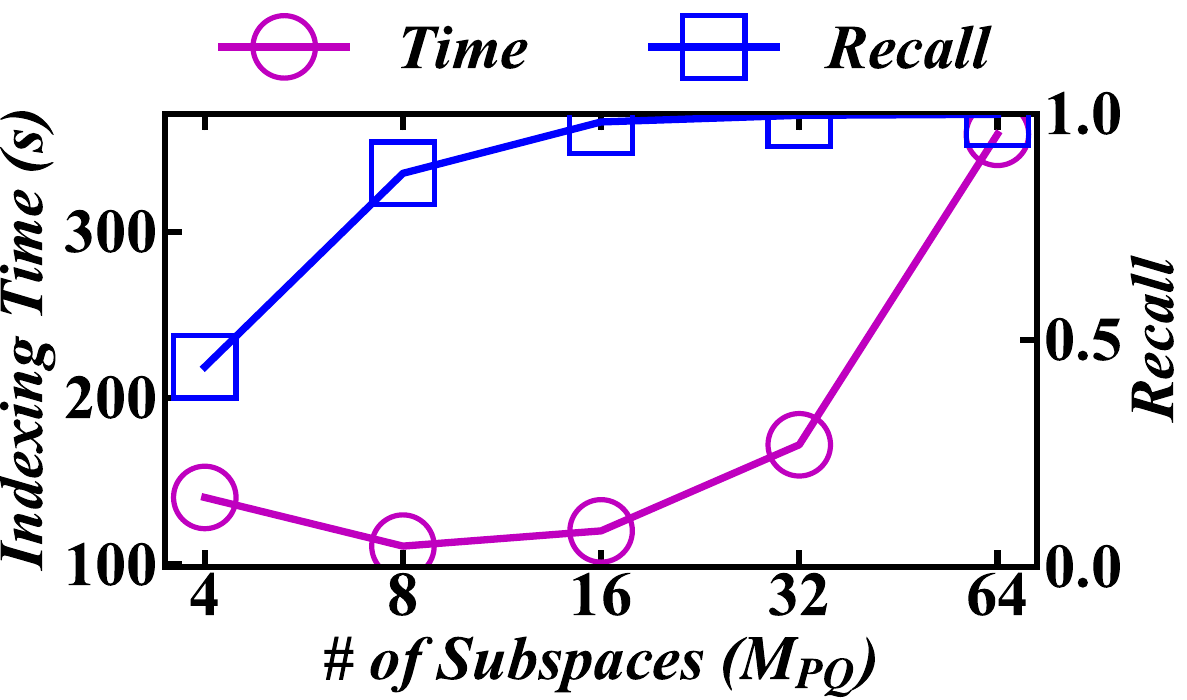}}{(b) LAION-1M ($L_{PQ}=8$)}
  \newline
  \caption{Effect of parameters on HNSW-PQ.}
  \label{fig: HNSW-PQ laion1m}
\end{figure}

\subsubsection{\textbf{Deploying SQ in HNSW}}
\label{subsubsec: HNSW SQ}
Scalar Quantization (SQ) \cite{LVQ,Boufounos12} compresses each dimension of vectors independently. It divides data ranges into intervals, assigning each interval an integer value. Floating-point values within an interval are approximated by that integer value, thus reducing vector size by lowering the required bits ($L_{SQ}$). {For example, with $L_{SQ} = 8$, each dimension is represented by an integer (0$\sim$255). Before distance computation, these integers are converted back into floating-point numbers to form the decoded vector, which is lossy. The error vector $E_{\boldsymbol{u}}$ is obtained by subtracting the decoded vector from $\boldsymbol{u}$.}

To accelerate HNSW construction via SQ, we preprocess high-dimensional vectors to identify the maximum and minimum values per dimension. We calculate the interval for each dimension by subtracting the minimum from the maximum. During index construction, each inserted vector is encoded using these precomputed intervals, producing compact SQ codes for distance computations. As per Theorem \ref{the: Dist. Comp.}, the optimal $L_{SQ}$ balances distance comparison accuracy with construction efficiency. The resulting HNSW, termed HNSW-SQ, replaces original vector data with compact SQ codes, enabling efficient distance calculations.

In Figure \ref{fig: HNSW-SQ and HNSW-PCA laion1m}(a), we assess HNSW-SQ across varying $L_{SQ}$ values. The results show that indexing time first decreases and then increases as $L_{SQ}$ grows. This trend arises from the lack of specialized data types for 2 and 4 bits, necessitating the use of \textit{uint8} type, which reduces operational efficiency. In contrast, 8 bits, while larger, aligns well with \textit{uint8}, achieving an optimal balance between compression error and operational efficiency, thereby reducing indexing time. While 16 bits can be represented with \textit{uint16}, the larger vector size increases indexing time. Notably, search accuracy consistently improves with higher $L_{SQ}$, supporting the theoretical analysis.

\subsubsection{\textbf{Deploying PCA in HNSW}}
\label{subsubsec: HNSW PCA}
Principal Component Analysis (PCA) \cite{PCA} reduces dimensionality by projecting high-dimensional vectors into a lower-dimensional space. {It applies an orthogonal transformation matrix to a vector $\boldsymbol{u}$, preserving its norm while prioritizing significant components in lower dimensions. The compact vector code $\boldsymbol{u}^{\prime}$ is formed by retaining the first $d_{PCA}$ dimensions, substantially reducing storage requirements. The error vector $E_{\boldsymbol{u}}$ is calculated by subtracting the dimension-reduced vector from the transformed vector of $\boldsymbol{u}$, with the dimension-reduced vector zero-padded to match dimensionality.}

To expedite HNSW construction via PCA, we preprocess high-dimensional vectors by deriving an orthogonal projection matrix through eigenvalue decomposition of the covariance matrix. Each inserted vector is then transformed into a low-dimensional space using this matrix, with principal components serving as compact PCA codes. Distance computations are performed directly on these compact representations. Theorem \ref{the: Dist. Comp.} indicates that an optimal $d_{PCA}$ balances distance comparison accuracy and construction efficiency. The resulting HNSW, designated as HNSW-PCA, substitutes the original vectors with compact PCA representations, thereby facilitating more efficient distance computations.

Figure \ref{fig: HNSW-SQ and HNSW-PCA laion1m}(b) shows the evaluation of HNSW-PCA across various $d_{PCA}$ configurations. We find that indexing time increases with higher $d_{PCA}$, while search accuracy improves, aligning with the theoretical analysis. Optimal trade-off between indexing time and search performance occurs with $d_{PCA}$ ranging from 256 to 512. Notably, $d_{PCA}$ reaches 420 when 90\% cumulative variance is achieved on the LAION dataset. In the experiments, $d_{PCA}$ will be set to ensure at least 90\% cumulative variance.

\begin{figure}
% \vspace{-0.2cm}
  \setlength{\abovecaptionskip}{0cm}
  \setlength{\belowcaptionskip}{0cm}
  \centering
  \footnotesize
  \stackunder[0.5pt]{\includegraphics[scale=0.2]{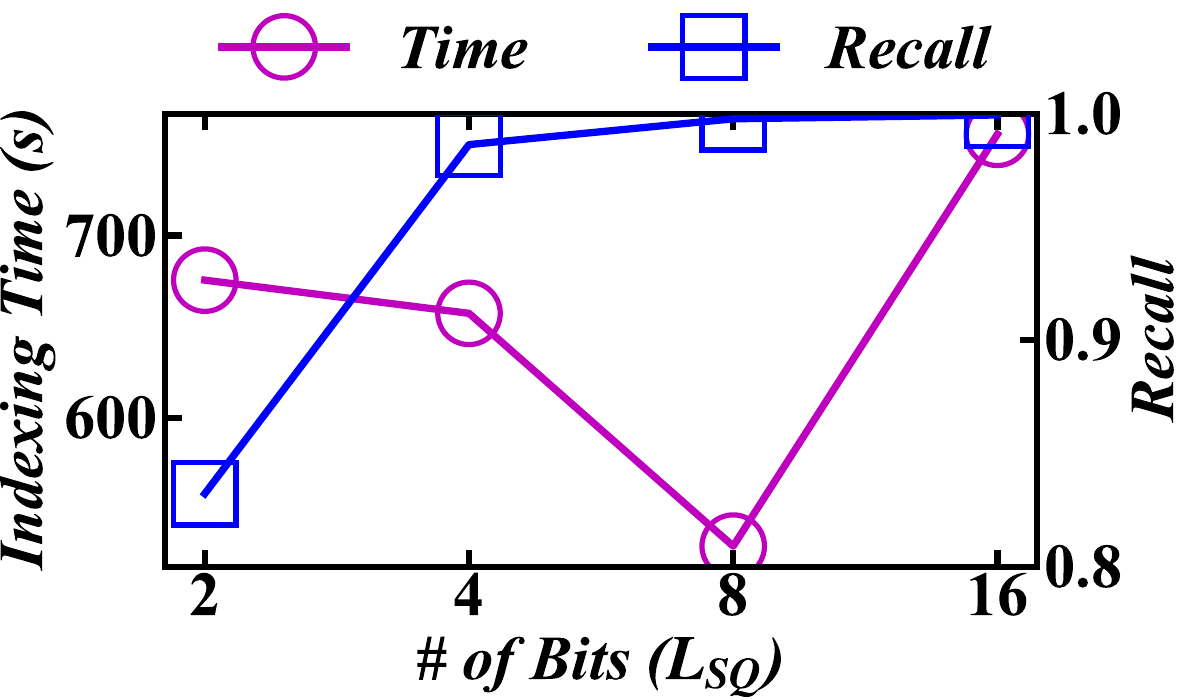}}{(a) LAION-1M (HNSW-SQ)}
  \hspace{0.15cm}
  \stackunder[0.5pt]{\includegraphics[scale=0.2]{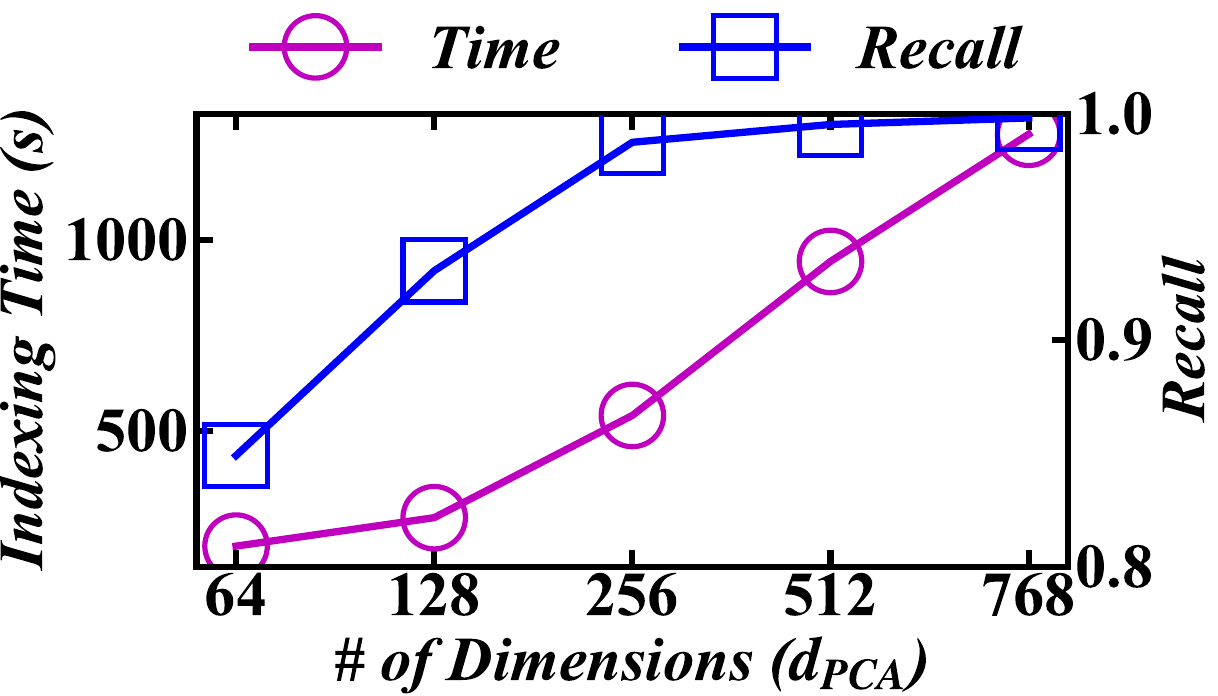}}{(b) LAION-1M (HNSW-PCA)}
  \newline
  \caption{Effect of parameters on HNSW-SQ and HNSW-PCA.}
  \label{fig: HNSW-SQ and HNSW-PCA laion1m}
\end{figure}

\vspace{-0.15cm}
\subsubsection{\textbf{Lessons Learned}}
\label{subsubsec: lessons learned}
Theoretical and empirical analyses reveal that index construction can be accelerated using compact coding techniques, and construction efficiency and index quality can be well balanced by adjusting the compression error. An appropriate compression error reduces vector data size, improving memory access and SIMD operation efficiency while preserving accurate distance comparisons. Following this insight, optimized variants \cite{OPQ, RaBitQ, WangLKC16, Distill-VQ, ZhangDW14, HeNB21, ScaNN, DeltaPQ, Boufounos12} of PQ, SQ, and PCA may be integrated into HNSW to further speed up index construction. Note that these variants must avoid excessive processing overhead to maintain the balance between construction efficiency and index quality. Nonetheless, the core mechanism of these methods—vector size reduction—does not align well with HNSW's construction characteristics on modern CPUs. Regardless of whether HNSW-PQ, HNSW-SQ, or HNSW-PCA is used, the random memory access pattern during index construction remains unchanged. In terms of arithmetic operations, HNSW-PQ’s distance table lookup is unable to fully utilize SIMD operations, as each computation is executed individually \cite{RaBitQ}. While HNSW-SQ and HNSW-PCA reduce register loads by computing directly on compressed vectors, they still process distance computations sequentially, underutilizing SIMD’s advantages \cite{PQfast}. We highlight that current variants of PQ, SQ, and PCA do not fundamentally resolve these limitations.

\subsection{The \texttt{Flash} Method}
\label{subsec: opt method}
In Section \ref{subsec: baseline solutions}, we demonstrate that incorporating existing compact coding methods into HNSW construction can enhance index construction speed. However, these methods either offer limited gains in indexing efficiency or degrade search performance. This is mainly because they are unable to effectively reduce random memory accesses and fully exploit the advantages of SIMD instructions. To overcome these limitations, we propose a speci\underline{\textbf{f}}ic compact coding strategy a\underline{\textbf{l}}ongside memory l\underline{\textbf{a}}yout optimization for the con\underline{\textbf{s}}truction process of \underline{\textbf{H}}NSW, named \texttt{Flash}, which minimizes random memory accesses while maximizing SIMD utilization.

\begin{figure*}
% \vspace{-0.3cm}
  \centering
  \setlength{\abovecaptionskip}{0cm}
  \setlength{\belowcaptionskip}{-0.1cm}
  \includegraphics[width=\linewidth]{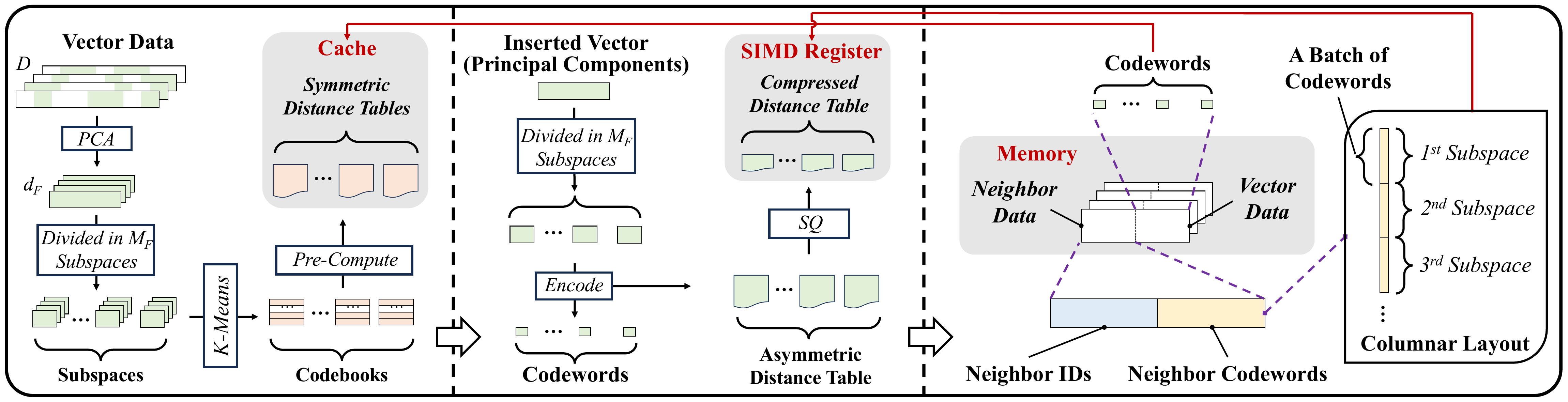}
  \caption{Illustrating the coding pipeline and data layout of \texttt{Flash}.}
  \label{fig:compact code}
  \vspace{-0.3cm}
\end{figure*}

\subsubsection{\textbf{Design Overview}}
\label{subsubsec: design overview}
Noting that HNSW-PQ strikes an optimal balance between construction efficiency and search performance, with PQ offering flexibility through two adjustable parameters that affect the compression error. We leverage the core principles of PQ to devise \texttt{Flash}, tailored to the HNSW construction process on modern CPUs. Figure \ref{fig:compact code} provides an overview of the coding pipeline and data layout used by \texttt{Flash}.

Given that a CPU core has multiple SIMD registers, \texttt{Flash} partitions the high-dimensional space into subspaces, enabling parallel processing of subspaces in these registers.
To accommodate the distinct distance computations in Candidate Acquisition (CA) and Neighbor Selection (NS) stages, it introduces Asymmetric Distance Tables (ADT) for the CA stage, residing in SIMD registers, and Symmetric Distance Tables (SDT) for the NS stage, located in cache. It optimizes the bit counts allocated for each codeword (i.e., encoded vector) and applies SQ to compress the precomputed ADT to fit within a SIMD register, thereby reducing register loads. Since high-dimensional vectors often exhibit uneven variances across dimensions, directly encoding original subspaces may result in suboptimal bit utilization. To address this, \texttt{Flash} utilizes PCA to extract the principal components of vectors, generating subspaces based on these components, enhancing bit utilization within the SIMD register's constraints.
When organizing neighbor lists, neighbor IDs are grouped with corresponding codewords to minimize random memory accesses. Rather than sequentially storing all codewords for a neighbor, \texttt{Flash} gathers the codewords of a batch of neighbors within a specific subspace. The batch size matches the number of centroids per subspace. This alignment allows a register load to fetch an entire batch, enabling partial distances to be obtained in parallel using SIMD shuffle operations since the ADT for a subspace is fully contained within a SIMD register.

\texttt{Flash} enhances HNSW construction by optimizing memory layout and SIMD operations. Two hyperparameters—the number of subspaces ($M_F$) and the dimension of the principal components ($d_F$)—can be adjusted to balance construction efficiency and search performance (Theorem \ref{the: Dist. Comp.}). These adjustments do not affect \texttt{Flash}'s memory access pattern or SIMD optimizations. We highlight that existing SIMD optimizations for PQ, as discussed in \cite{PQfast,QuickADC}, target linear scans or inverted file (IVF) structures, where vector batches are naturally grouped, rather than graph index settings \cite{RaBitQ}.

\subsubsection{\textbf{Extracting Principal Components}}
\label{subsubsec: extract principal components}
To align with SIMD register constraints, each subspace has a limited bit representation (e.g., 4 bits). Encoding on principal components optimally utilizes these bits, reducing quantization error.
Thus, \texttt{Flash} employs PCA to project high-dimensional vectors by rearranging dimensions in descending order of variance. This approach preserves essential low-dimensional principal components by selecting the first $d_{F}$ dimensions, encapsulating the most significant features.
Given a dataset $\boldsymbol{S}$ of $n$ vectors, each with $D$ dimensions, the mean vector is $\bar{\boldsymbol{u}} = \frac{1}{n} \sum_{i=1}^{n} \boldsymbol{u}_i$. Data is centered by subtracting $\bar{\boldsymbol{u}}$ from each vector, producing the centered matrix $\dot{\boldsymbol{S}} = \boldsymbol{S} - \boldsymbol{1} \cdot \bar{\boldsymbol{u}}^{\top}$, where $\boldsymbol{1}$ is a unit vector.
The covariance matrix $\Sigma = \frac{1}{n} (\dot{\boldsymbol{S}}^{\top} \dot{\boldsymbol{S}})$ is then computed, and its eigenvalues $\lambda_1, \lambda_2, \ldots, \lambda_{D}$ and eigenvectors $\boldsymbol{a}_1, \boldsymbol{a}_2, \ldots, \boldsymbol{a}_{D}$ are extracted. For a given variance fraction $\alpha$, the function $f(d) = \frac{\sum_{i=1}^{d} \lambda_i}{\sum_{i=1}^{D} \lambda_i}$ identifies the smallest $d_{F}$ such that $f(d_{F}) \geq \alpha$, defining the principal component space of dimension $d_{F}$. The principal components of each vector are established based on the basis $\boldsymbol{A}_{1:d_{F}} = (\boldsymbol{a}_1 \ \boldsymbol{a}_2 \ \cdots \ \boldsymbol{a}_{d_{F}})$, yielding the reduced-dimensional data $\widetilde{\boldsymbol{S}}$:
\begin{equation}
 \widetilde{\boldsymbol{S}} = \{ \tilde{\boldsymbol{u}}_i \ | \ \tilde{\boldsymbol{u}}_i = \mathbf{A}_{1:d_{F}}^{\top} \boldsymbol{u}_i, \ \text{for } i = 1, \ldots, n \}
\end{equation}

\subsubsection{\textbf{Subspace Division and Distance Table Compression}}
\label{subsubsec: subspace division}
\texttt{Flash} decomposes each vector's principal components $\tilde{\boldsymbol{u}}$ into $M_{F}$ disjoint subvectors, denoted as $\tilde{\boldsymbol{u}} = [\tilde{\boldsymbol{u}}_{1}, \tilde{\boldsymbol{u}}_{2}, \ldots, \tilde{\boldsymbol{u}}_{M_{F}}]$, where each $\tilde{\boldsymbol{u}}_{i}$ resides in a specific subspace of the principal component domain. Similar to PQ, a codebook $\boldsymbol{C}_i = \{\boldsymbol{c}_{i1}, \boldsymbol{c}_{i2}, \ldots, \boldsymbol{c}_{iK}\}$ is created for each subspace, with $\boldsymbol{c}_{ij}$ denoting a codeword (i.e., a centroid ID) in the $i$-th subspace, and $K$ representing the number of centroids. The $i$-th subvector $\tilde{\boldsymbol{u}}_{i}$ is quantized by identifying the closest centroid in the codebook $\boldsymbol{C}_i$ of the $i$-th subspace:
\begin{equation}
\psi(\tilde{\boldsymbol{u}}_{i}) = \arg \min_{\boldsymbol{c}_{ij} \in \boldsymbol{C}_i} \delta( \tilde{\boldsymbol{u}}_{i}, \boldsymbol{c}_{ij} )
\end{equation}
where $\psi(\tilde{\boldsymbol{u}}_{i})$ is the nearest centroid's ID serving as the codeword of the subvector. Consequently, the complete principal components $\tilde{\boldsymbol{u}}$ are represented as a sequence of codewords, one for each subvector. The codeword length $L_F$ relates to the number of centroids $K$ via $L_F=\lceil log_2{K} \rceil$ in a subspace. \texttt{Flash} samples a subset from the complete vector set to efficiently construct codebooks, following PQ and its variants \cite{PQ,OPQ}.

For each inserted vector $\boldsymbol{u}$, asymmetric distance tables (ADT) and codewords are simultaneously generated across $M_{F}$ subspaces using precomputed codebooks, leveraging shared distance computations between each subvector $\tilde{\boldsymbol{u}}_{i}$ and the centroids in $\boldsymbol{C}_i$.
This approach avoids duplicate calculations, enhancing efficiency. In the Candidate Acquisition (CA) stage, the ADT enables fast summation of partial distances to calculate distances between $\boldsymbol{u}$ and visited vertices.
To adapt ADT for SIMD registers, \texttt{Flash} employs SQ to map each distance $dist$ in the table to discrete levels:
\begin{equation}
\eta(dist) = \Big\lfloor \left(\frac{dist - dist_{\min}}{\Delta}\right) \cdot (2^{H}-1) \Big\rfloor
\end{equation}
where $\eta(dist)$ is the quantized distance, $dist_{\min}$ is the minimum distance, $\Delta$ is the quantization step size ($\Delta=dist_{max}-dist_{min}$), and $H$ is the bit number for a quantized distance value. {With $K=16$ and $H=8$, each ADT is 128 bits in size. These ADTs are stored as \textit{one-dimensional arrays} that can reside in registers during the insertion process, and are freed once the insertion is completed.}

In the Neighbor Selection (NS) stage, distance computations among candidates preclude the use of ADT. To address this, \texttt{Flash} precomputes a symmetric distance table (SDT) that stores distances between centroids in each subspace. {With $M_F$ subspaces and $K$ centroids, $M_F$ SDTs are built, each holding $K^2$ distance values in a \textit{two-dimensional array}.} These values are quantized similarly to ADT for optimal caching. \texttt{Flash} obtains distances between candidates from SDTs based on their codewords. Note that SDTs are shared by all inserted vectors during index construction, eliminating numerous random memory accesses.
To compare distance values from SDTs and ADTs for neighbor selection, \texttt{Flash} uses the same quantization step size $\Delta$ and target bit number $H$ for both tables. It calculates the maximum and minimum distances ($dist_{max}^{i}$, $dist_{min}^{i}$) within each subspace, summing each $dist_{max}^{i}$ to obtain $dist_{max}$ and taking the lowest among $dist_{min}^{i}$ as $dist_{min}$.
% The quantization step size is $\Delta=dist_{max}-dist_{min}$.

\subsubsection{\textbf{Access-Aware Memory Layout}}
{In the graph index, all vertices share a uniform neighbor list structure and size. A \textit{contiguous memory block} is allocated for each vertex's vector and neighbor data, accessed via an offset determined by the vertex ID.}
For an inserted vector $\boldsymbol{u}$, the CA stage updates candidates via greedy search, visiting a vertex's neighbors in the sub-graph index and computing their distances to $\boldsymbol{u}$.
Current graph indexes suffer from numerous random accesses when fetching neighbor vectors, as neighbor vectors are too big to be stored alongside neighbor IDs. By attaching neighbor codewords to these IDs, \texttt{Flash} computes distances directly in register-resident ADTs, avoiding random memory accesses. {For each vertex, \texttt{Flash} stores the neighbor IDs in one fixed memory block, followed by the neighbor codewords in another (see Figure \ref{fig:compact code}, lower right).} To efficiently use SIMD instructions, \texttt{Flash} gathers $B$ neighbor codewords in a subspace, processing them concurrently to ensure efficient SIMD register loading and parallel distance computations for $B$ neighbors.
% Note that each subspace's ADT resides entirely within a SIMD register.

\subsubsection{\textbf{SIMD Acceleration}}
\label{subsubsec: simd acceler}
SIMD instructions support vectorized execution, enabling simultaneous processing of multiple data points. This can accelerate distance calculations between the inserted vector $\boldsymbol{u}$ and a batch of neighbors. \texttt{Flash} uses SIMD to look up ADTs and sum partial distances across subspaces. An ADT resides in a SIMD register, storing distances between $\boldsymbol{u}$ and centroids in each subspace. \texttt{Flash} employs shuffle operations to extract partial distances using neighbors’ codewords. In a single operation, codewords for $B$ neighbors are loaded into a register, which then serves as indices to access the partial distances in the ADT. Specifically, the shuffle operation rearranges the data within the register according to the indices, extracting partial distances from the ADT without complex conditional branching or excessive memory accesses. Partial distances from two subspaces are summed using SIMD instructions, and this process is repeated across all subspaces to compute the final distances for the $B$ neighbors. By using SIMD lookups for ADTs, \texttt{Flash} minimizes random memory accesses and exploits high-speed registers for parallel arithmetic, thereby improving computational efficiency.

\subsubsection{\textbf{Implementation on HNSW}}
\label{subsubsec: hnsw-flash}
We integrate \texttt{Flash} into the HNSW algorithm to accelerate indexing and search processes. It be- gins by preprocessing the dataset to extract principal components (Section \ref{subsubsec: extract principal components}), then sampling a subset, dividing it into subspaces, generating codebooks, and pre-computing centroid distances for the SDT. During index construction (lines 2-8 of Algorithm \hyperref[alg: hnsw construction]{1}), each vector $\boldsymbol{u}$ has its principal components partitioned by the codebooks. Distances between each subvector $\boldsymbol{u}_i$ and codebook centroids $\boldsymbol{C}_i$ form the ADT, and the nearest centroid ID is selected as the codeword for $\boldsymbol{u}_i$. The ADT is quantized (Section \ref{subsubsec: subspace division}) and held in a register until insertion completes. For line 5, SIMD acceleration computes distances between $\boldsymbol{u}$ and batches of neighbors based on ADT (Section \ref{subsubsec: simd acceler}). For lines 6–7, distances between candidates are approximated via SDT lookups, using each candidate’s codewords as indices. We tune the number of subspaces ($M_F$) and the dimensionality of principal components ($d_F$) to manage compression error (Theorem \ref{the: Dist. Comp.}) without affecting \texttt{Flash}’s efficient memory access or SIMD operations. Other fixed parameters include bits per codeword ($L_F$) and batch size ($B$), set by the SIMD register size. The search procedure follows the CA paradigm, and we apply an additional reranking step based on original vectors to refine results.

\vspace{0.2cm}
\noindent\underline{Remarks.}
(1) {Although many studies optimize the compression error of PQ, SQ, and PCA \cite{OPQ,ScaNN,LVQ,ZhangTHW22}, none are specifically tailored to meet the distinct requirements of graph indexing on modern CPU architectures. Integrating these compression methods into the graph construction process often yields limited efficiency gains or even degrades search performance, as they neither address random memory access patterns nor leverage SIMD efficiently. In contrast, \texttt{Flash} is a specialized compact coding strategy designed for graph index construction and optimized for current CPU architectures. Notably, \texttt{Flash} incorporates the complementary principles of established vector compression methods (e.g., PQ, SQ, PCA) within its framework and offers the flexibility to integrate new optimizations of these compression methods. While several studies integrate vector compression into search processes \cite{yue2023routing,DiskANN,Starling,LVQ}, search differs markedly from index construction, which involves more complex tasks such as the NS stage. Moreover, it is crucial that compression methods avoid excessive overhead, maintaining a balance between construction speed and index quality. Recent work \cite{yue2023routing,DiskANN,Starling,yang2024bridging} shows that boosting search performance often increases indexing time.}
(2) In \texttt{Flash}, codeword and ADT generations share identical distance computations (Section \ref{subsubsec: subspace division}), prompting an integrated implementation to eliminate redundancies. We utilize the Eigen library for matrix manipulations throughout data processing, including principal component extraction, codebook generation, and distance table creation. When the maximum number of neighbors $R$ exceeds the batch size $B$, we split each vertex's neighbor data into multiple blocks, each matching the batch size. This structure enables the efficient distance computation within each block using SIMD instructions.

\subsubsection{\textbf{Cost Analysis}}
\label{subsubsec: cost analysis}
In the CA stage, the time complexity of HNSW is approximately $O(R\cdot log(n))$, where $R$ is the maximum number of neighbors and $log(n)$ denotes the search path length \cite{graph_survey_vldb2021,HVS}. For each hop, vector data of the visiting vertices' neighbors must be accessed. Therefore, the number of memory accesses in the original HNSW algorithm ($NMA_{orig}$) can be expressed as:
\begin{equation}
    NMA_{orig} \sim O(R\cdot log(n))
\end{equation}
In our optimized implementation, neighbors' codewords and their respective IDs are stored contiguously, eliminating the need for random memory accesses to fetch neighbor vectors. As a result, the number of memory accesses in our implementation ($NMA_{ours}$) is:
\begin{equation}
    NMA_{ours} \sim O(log(n))
\end{equation}
For simplicity, this analysis excludes cache hits. Notably, the \texttt{Flash} code is significantly smaller than the original vector data, resulting in a higher cache hit ratio.
In the NS stage, computing distances between candidates requires accessing their vector data. For the original HNSW algorithm, the number of memory accesses is $O(C)$, where $C$ denotes the size of the candidate set. In contrast, \texttt{Flash} maintains the entire SDT in the L1 cache, avoiding fetching vector data from main memory during the NS stage.

For each distance computation using SIMD acceleration, the number of register loads in the original HNSW ($NRL_{orig}$) is:
\begin{equation}
    NRL_{orig} = \frac{32\cdot D}{U}
\end{equation}
where $D$ is the vector dimensionality (each vector has $D$ floating-point values) and $U$ is the bit-width of a register. This is because each vector segment must be individually loaded into an SIMD register. Furthermore, complex arithmetic operations, including subtraction, multiplication, and addition, are required for distance computation. In contrast, the number of register loads in our implementation ($NRL_{ours}$) is
\begin{equation}
    NRL_{ours} = \frac{M_{F}\cdot H}{U}
\end{equation}
where $M_F$ is the number of subspaces and $H$ represents the number of bits to encode a partial distance within each subspace. In \texttt{Flash}, the ADT is register-resident. Thus, it loads the codewords of $B$ neighbors within a subspace into a register in a single operation. To compute distances for these neighbors, $M_F$ batches of codewords are loaded, corresponding to $M_F$ register loads. The parameter $B$ is typically set to $U/H$ to align with register capacity. Additionally, the arithmetic operations in \texttt{Flash} are reduced to simple addition. With $D = 768$, $U = 128$, $M_F = 16$, and $H = 8$, the number of register loads required for a distance computation in the original HNSW is 192, whereas in our optimized version, it is reduced to just 1.
\setlength{\textfloatsep}{12pt plus 2pt minus 2pt}
\section{Evaluation}
\label{sec: experiments}
We conduct a comprehensive experimental evaluation to answer the following research questions:

\noindent\textbf{Q1}: What impact do mainstream compact coding algorithms have on HNSW construction? (Sections \ref{subsec: index perf} and \ref{subsec: search perf})

\noindent\textbf{Q2}: How does \texttt{Flash} affect HNSW's construction efficiency, index size, and search performance? (Sections \ref{subsec: index perf} and \ref{subsec: search perf})

\noindent\textbf{Q3}: How does \texttt{Flash} scale across varying data volumes and segment counts (Section \ref{subsec: scalability})?

\noindent{
\textbf{Q4}: How general is \texttt{Flash} with respect to various graph algorithms, HNSW optimizations, and SIMD instructions (Section \ref{subsec: generality})?
}

\noindent\textbf{Q5}: To what extent does \texttt{Flash} optimize memory accesses and arithmetic operations during index construction? (Section \ref{subsec: ablation})

\noindent\textbf{Q6}: How do the parameters of \texttt{Flash} influence construction efficiency and index quality? (Section \ref{subsec: param sensitivity})

All source codes and datasets are publicly available at: \url{https://github.com/ZJU-DAILY/HNSW-Flash}.

\subsection{Experimental Settings}

\begin{table}[t]
% \vspace{-0.2cm}
 \setstretch{0.9}
 \fontsize{7.5pt}{4mm}\selectfont
  \caption{Statistics of experimental datasets.}
  \vspace{-0.4cm}
  \label{tab:datasets}
  \setlength{\tabcolsep}{.02\linewidth}{
  \begin{tabular}{l|l|l|l|l}
    \hline
    \textbf{Datasets} & \textbf{Data Volume} & \textbf{Size (GB)} & \textbf{Dim.} & \textbf{Query Volume} \\
    \hline
    \hline
    ARGILLA \cite{argilla} & 21,071,228 & 81 & 1,024 & 100,000 \\
    \hline
    ANTON \cite{anton} & 19,399,177 & 75 & 1,024 & 100,000 \\
    \hline
    LAION \cite{laion} & 100,000,000 & 293 & 768 & 100,000 \\
    \hline
    IMAGENET \cite{imagenet} & 13,158,856 & 38 & 768 & 100,000 \\
    \hline
    COHERE \cite{cohere} & 10,124,929 & 30 & 768 & 100,000 \\
    \hline
    DATACOMP \cite{datacomp} & 12,800,000 & 37 & 768 & 100,000 \\
    \hline
    BIGCODE \cite{bigcode} & 10,404,628 & 30 & 768 & 100,000 \\
    \hline
    SSNPP \cite{bigann} & 1,000,000,000 & 960 & 256 & 100,000 \\
    % \hline
    % DEEP \cite{bigann} & 1,000,000,000 & 367 & 96 & 100,000 \\
    \hline
  \end{tabular}
  }\vspace{-0.3cm}
\end{table}

\subsubsection{\textbf{Dataset}}
We use eight real-world vector datasets of diverse volumes and dimensions from deep embedding models, all of which are publicly accessible on Hugging Face and Big ANN Benchmarks websites \cite{huggingface, bigann}. A detailed summary of these datasets is presented in Table \ref{tab:datasets}. The query set is sampled from the dataset, with the corresponding ground truth generated through a linear scan.

\begin{figure*}
% \vspace{-0.4cm}
  \setlength{\abovecaptionskip}{0cm}
  \setlength{\belowcaptionskip}{0cm}
  \centering
  \footnotesize
  \hspace{2cm}
  \stackunder[0.5pt]{\includegraphics[scale=0.25]{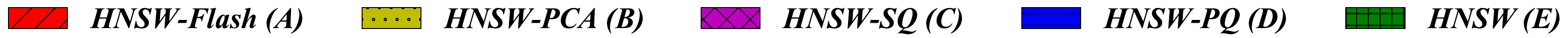}}{}
  \newline
  \stackunder[0.5pt]{\includegraphics[scale=0.27]{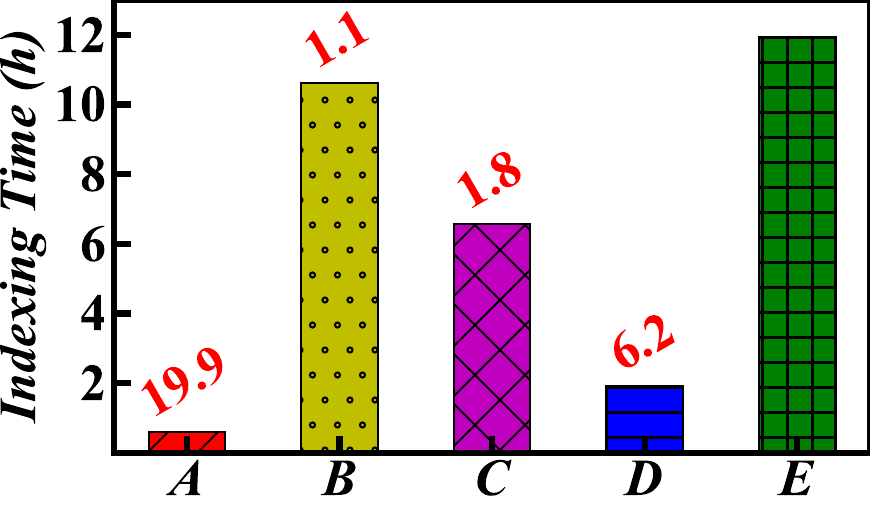}}{(a) SSNPP (10M)}
  \hspace{0.4cm}
  \stackunder[0.5pt]{\includegraphics[scale=0.27]{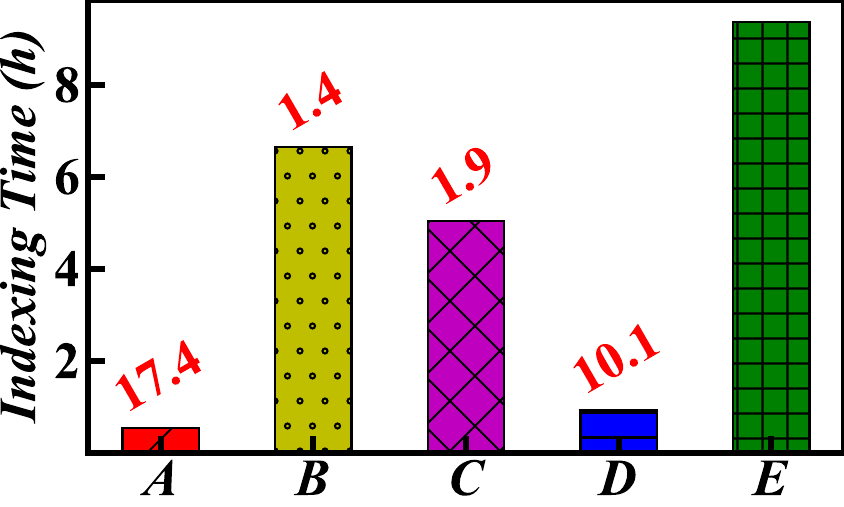}}{(b) LAION (10M)}
  \hspace{0.4cm}
  \stackunder[0.5pt]{\includegraphics[scale=0.27]{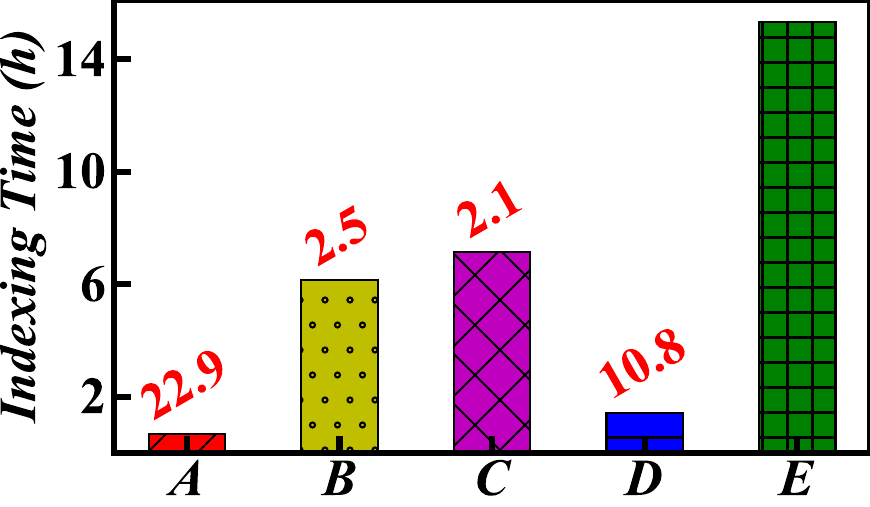}}{(c) COHERE (10M)}
  \hspace{0.4cm}
  \stackunder[0.5pt]{\includegraphics[scale=0.27]{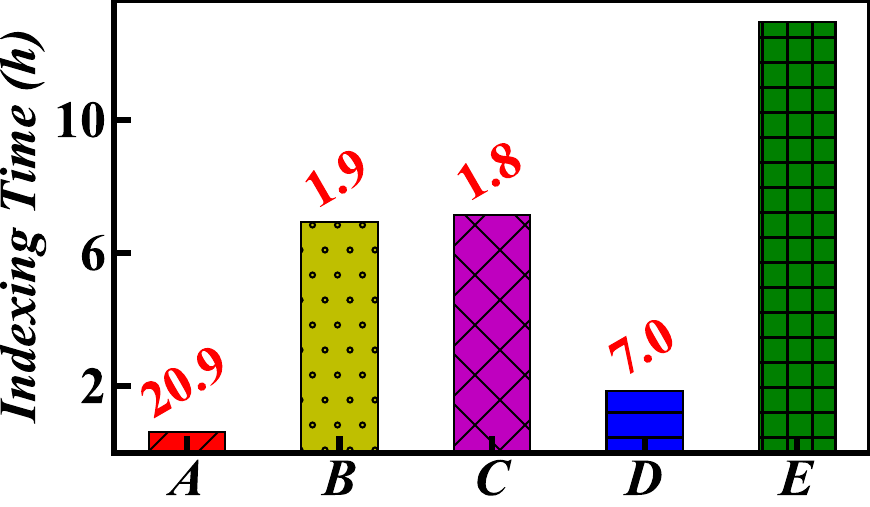}}{(d) BIGCODE (10M)}
  \newline
  \stackunder[0.5pt]{\includegraphics[scale=0.27]{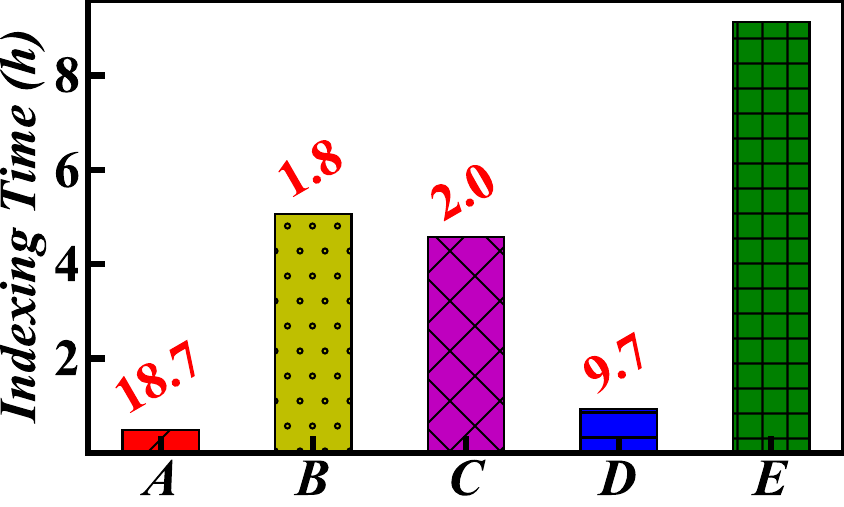}}{(e) IMAGENET (13M)}
  \hspace{0.4cm}
  \stackunder[0.5pt]{\includegraphics[scale=0.27]{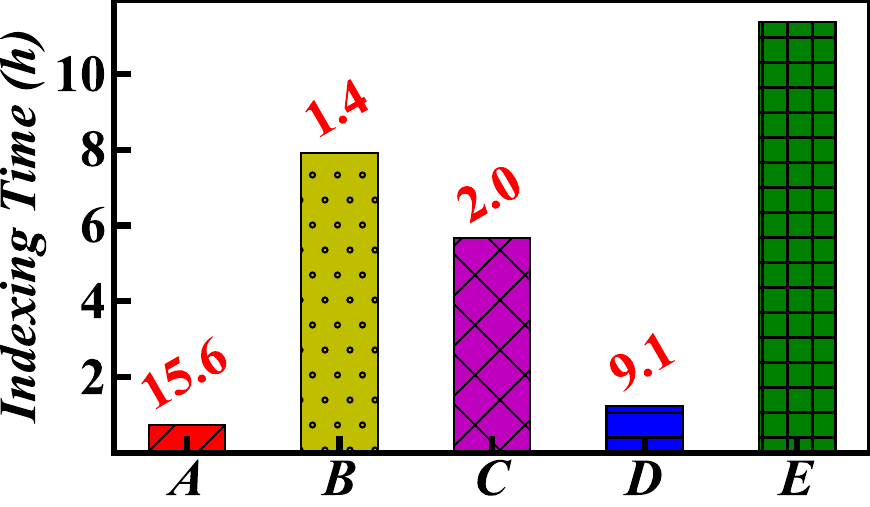}}{(f) DATACOMP (12M)}
  \hspace{0.4cm}
  \stackunder[0.5pt]{\includegraphics[scale=0.27]{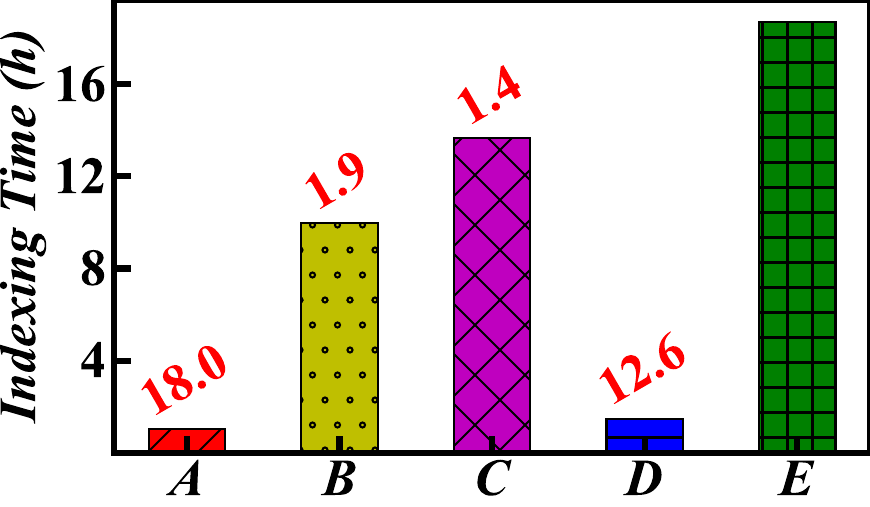}}{(g) ANTON (19M)}
  \hspace{0.4cm}
  \stackunder[0.5pt]{\includegraphics[scale=0.27]{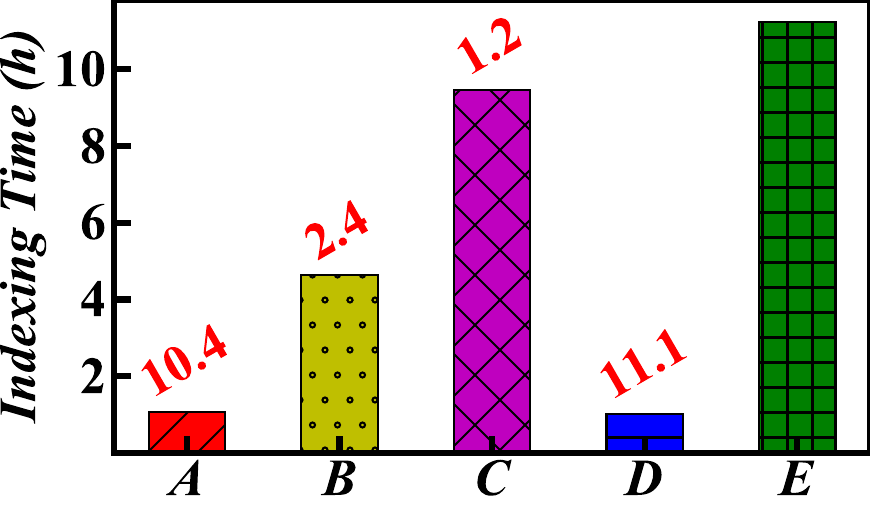}}{(h) ARGILLA (21M)}
  \newline
  \caption{Indexing times for all methods across eight datasets (red values at the top of each bar indicate speedup ratios).}
  \label{fig: index time}
  \vspace{-0.3cm}
\end{figure*}

\begin{figure*}
  \setlength{\abovecaptionskip}{0cm}
  \setlength{\belowcaptionskip}{0cm}
  \centering
  \footnotesize
  \hspace{2cm}
  \stackunder[0.5pt]{\includegraphics[scale=0.25]{figures/index_perf/indexing_perf_legend.pdf}}{}
  \newline
  \stackunder[0.5pt]{\includegraphics[scale=0.27]{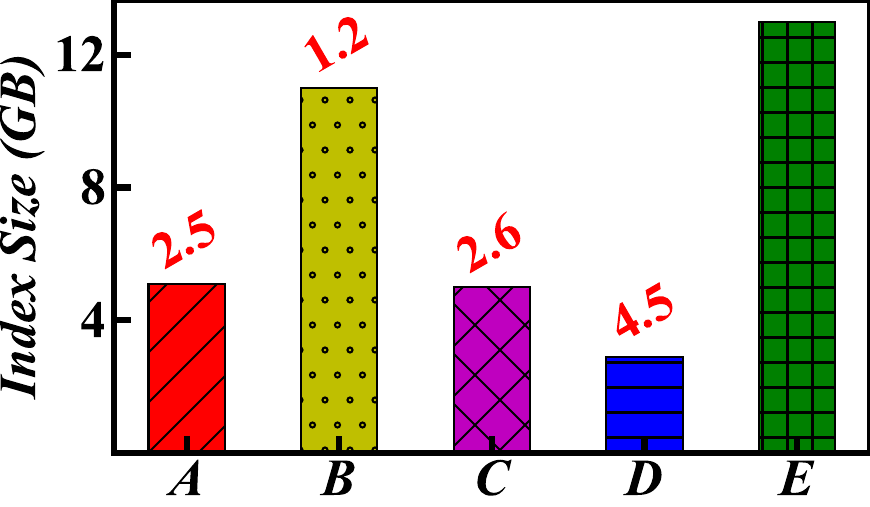}}{(a) SSNPP (10M)}
  \hspace{0.4cm}
  \stackunder[0.5pt]{\includegraphics[scale=0.27]{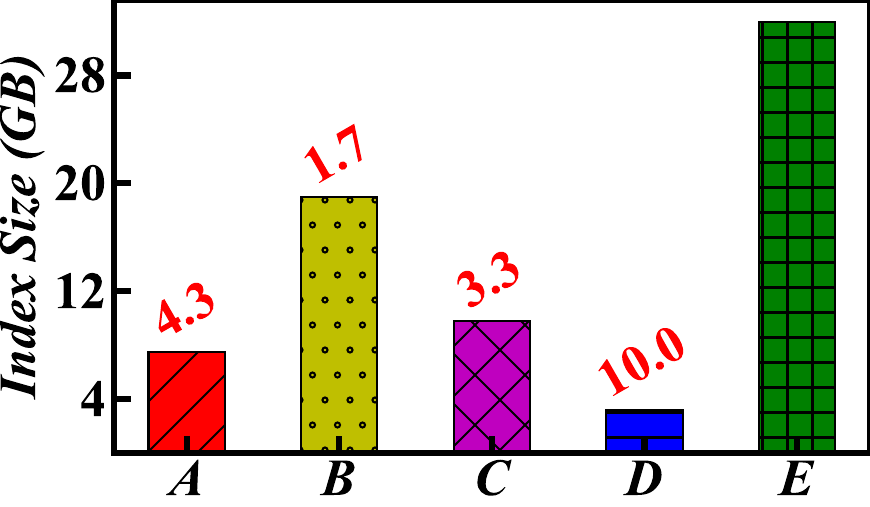}}{(b) LAION (10M)}
  \hspace{0.4cm}
  \stackunder[0.5pt]{\includegraphics[scale=0.27]{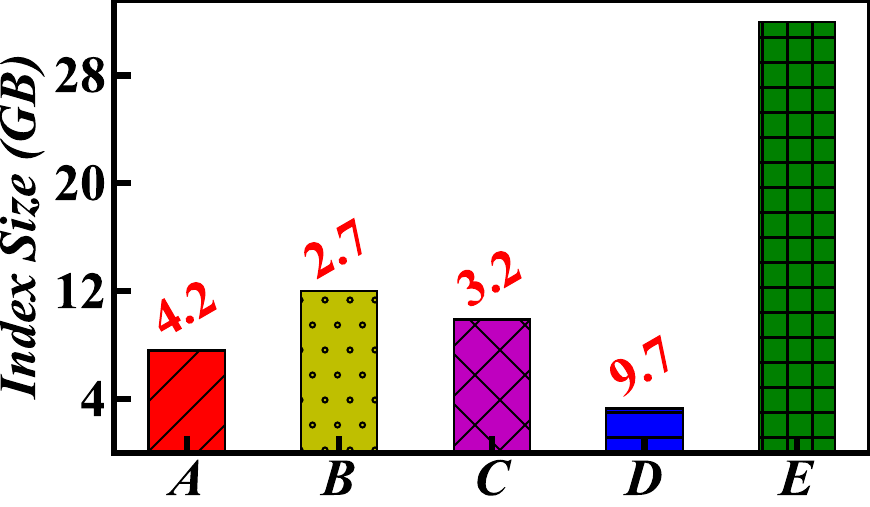}}{(c) COHERE (10M)}
  \hspace{0.4cm}
  \stackunder[0.5pt]{\includegraphics[scale=0.27]{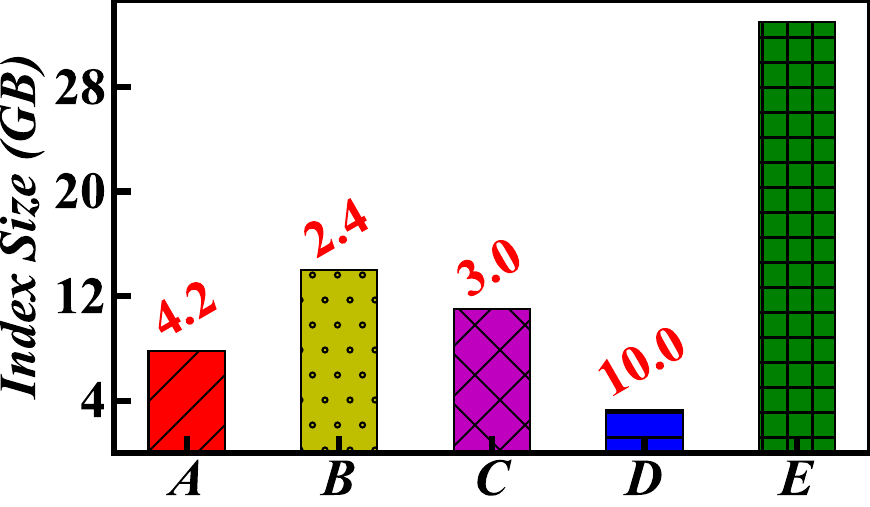}}{(d) BIGCODE (10M)}
  \newline
  \stackunder[0.5pt]{\includegraphics[scale=0.27]{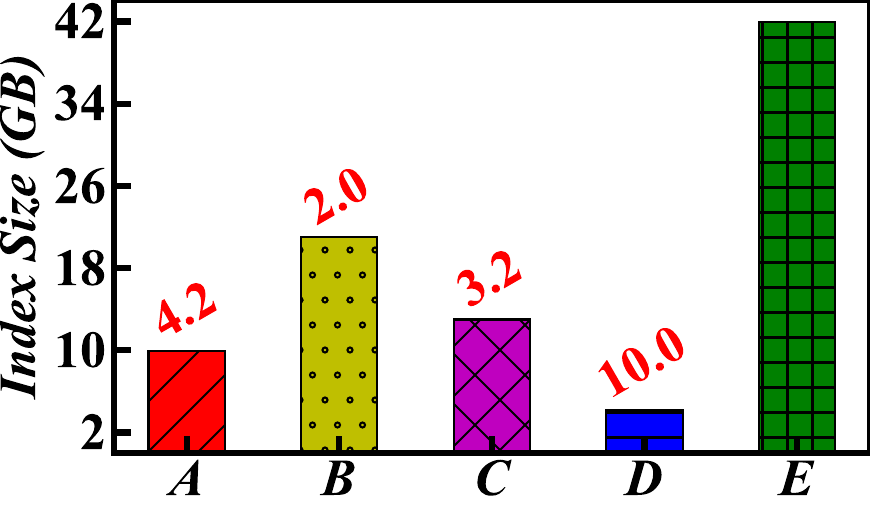}}{(e) IMAGENET (13M)}
  \hspace{0.4cm}
  \stackunder[0.5pt]{\includegraphics[scale=0.27]{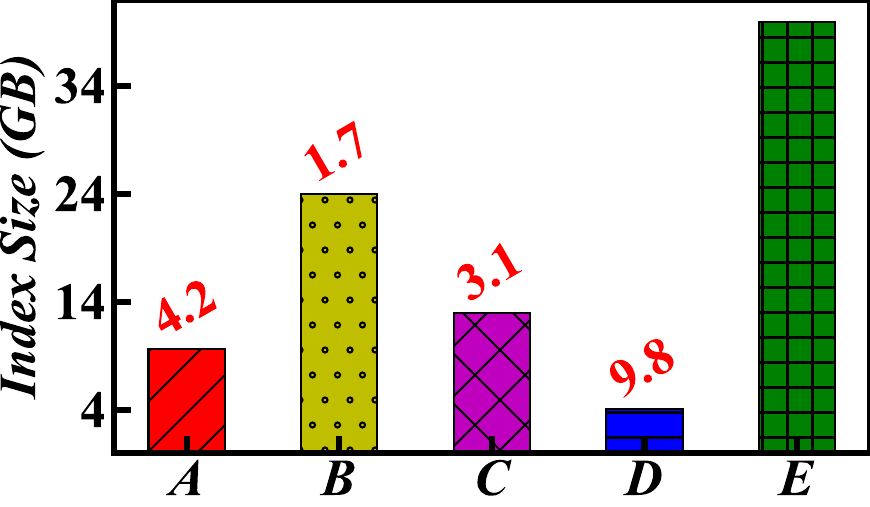}}{(f) DATACOMP (12M)}
  \hspace{0.4cm}
  \stackunder[0.5pt]{\includegraphics[scale=0.27]{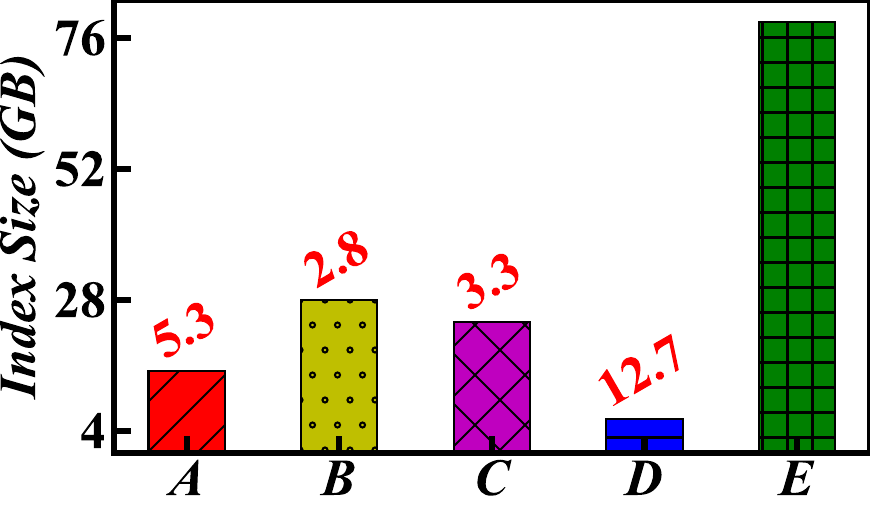}}{(g) ANTON (19M)}
  \hspace{0.4cm}
  \stackunder[0.5pt]{\includegraphics[scale=0.27]{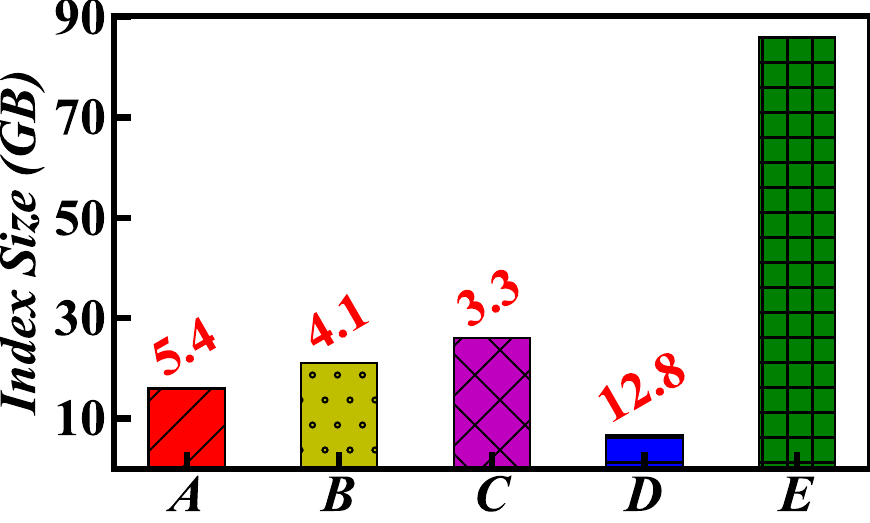}}{(h) ARGILLA (21M)}
  \newline
  \caption{Index sizes for all methods across eight datasets (red values at the top of each bar indicate compression ratios).}
  \label{fig: index size}
  \vspace{-0.3cm}
\end{figure*}

\subsubsection{\textbf{Compared Methods}}
We employ an advanced HNSW implementation from the latest \textit{hnswlib} repository \cite{hnswlib} as our benchmark, recognized as a standard in the field \cite{FosterK23, KhanAJ23}. This implementation is widely referenced by industrial vector databases \cite{SingleStore-V,hnsqlite,knowhere} and is a preferred choice for ANNS research \cite{ADSampling,PengZLJR23,IndykX23}. To accelerate index construction, three established compact coding methods are integrated into the repository, forming three baseline solutions. Additionally, the proposed \texttt{Flash} method is also incorporated into the repository for fair comparisons. {The generality of \texttt{Flash} is further evaluated on two recently optimized HNSW implementations (ADSampling \cite{ADSampling} and VBase \cite{zhang2023vbase}) and two additional graph algorithms (NSG \cite{NSG} and $\tau$-MG \cite{tau-MG}).}

\noindent$\bullet$ \textbf{HNSW} \cite{HNSW}. HNSW is a prominent graph-based ANNS algorithm \cite{graph_survey_vldb2021, DPG, ADSampling}. Its prevalence in industrial applications \cite{zhang2023vbase,PASE,ADBV} and extensive academic research \cite{LiZAH20,ZhangLW24,ZuoQZLD24} underscore its significance (see Section \ref{subsec: Related Work}). Given its representativeness, HNSW serves as a key graph index example in this research.

\noindent$\bullet$ \textbf{HNSW-PQ}. Product Quantization (PQ) is a recognized vector compression method in high-dimensional vector search \cite{PQ,OPQ,ScaNN}. We integrate PQ into the HNSW index construction process to accelerate graph indexing (see Section \ref{subsubsec: HNSW PQ} for details).

\noindent$\bullet$ \textbf{HNSW-SQ}. Scalar Quantization (SQ) converts floating-point values to integers, reducing memory overhead and enhancing search performance \cite{LVQ,LeanVec,qdrant-sq}. We integrate it into HNSW construction according to Section \ref{subsubsec: HNSW SQ}, and implement an optimized version to avoid decoding overhead based on a recent technical report \cite{qdrant-sq}.

\noindent$\bullet$ \textbf{HNSW-PCA}. Principal Component Analysis (PCA) is a classical dimensionality reduction method that enhances calculation efficiency by reducing the need to scan all dimensions \cite{ADSampling,yang2024bridging}. We integrate PCA into the construction process following Section \ref{subsubsec: HNSW PCA}.

\noindent$\bullet$ \textbf{HNSW-Flash}. We incorporate the proposed compact coding method, \texttt{Flash}, into the HNSW construction process. This method thoughtfully considers the unique characteristics of HNSW construction, optimizing memory layout and arithmetic operations to minimize random memory accesses and enhance SIMD utilization.

\subsubsection{\textbf{Parameter Settings}}
\label{subsubsec: param set}
For all methods compared, we set the maximum number of candidates ($C$) to 1024 and the maximum number of neighbors ($R$) to 32, following prior research recommendations\cite{HVS}. For compact coding parameters, we use established conventions to determine optimal settings. Specifically, the dimension of principal components ($d_{PCA}$) is chosen to ensure accumulated variance exceeds 0.9 in HNSW-PCA, as guided by our evaluations (see Section \ref{subsubsec: HNSW PCA}). For HNSW-SQ, we set the number of bits ($L_{SQ}$) to 8, which has been demonstrated as optimal in our experiments and existing literature \cite{qdrant-sq,LeanVec}. In HNSW-PQ, the number of subspaces ($M_{PQ}$) is identified through grid search, with bits per subspace ($L_{PQ}$) set to 8, consistent with common configurations in the literature \cite{PQ,OPQ,ZhanM0GZM21,PQfast}. For HNSW-Flash, we maintain the same subspace count ($M_{F}$) as HNSW-PQ; and the dimension of principal components ($d_{F}$) is selected via grid search. The bits per subspace ($L_{F}$) are set to 4, while each distance value is allocated 8 bits ($H=8$), enabling the ADT within a subspace to fit entirely within a SIMD register. Additionally, the batch size ($B$) for organizing the neighbor list is set to 16 to adapt to the number of distances within an ADT ($B=K=2^{L_F}$) and optimize register loading, resulting in each neighbor list comprising two blocks for separate execution of distance computations ($R/B=2$).

\subsubsection{\textbf{Performance Metrics}}
\label{subsubsec: perf metrics}
To evaluate index construction efficiency, we record indexing time, which includes data preprocessing times for algorithms with coding techniques. We assess index quality by measuring both search efficiency and accuracy for the query set. Search efficiency is quantified in Queries Per Second ($QPS$), representing the number of queries processed per second. {Search accuracy is measured through \textit{Recall} and the \textit{average distance ratio} (\textit{ADR}).} \textit{Recall} is defined as $Recall=\frac{|G\cup S|}{k}$, where $G$ is the ground truth set of the $k$ nearest vectors, and $S$ denotes the set of $k$ results returned by the algorithm, with $k$ set to 1 by default. {\textit{ADR}, defined in \cite{PatellaC08,TaoYSK10}, represents the average of the distance ratios between the retrieved $k$ database vectors and the ground truth nearest vectors.} All experimental results are derived from three repeated executions.

\subsubsection{\textbf{Environment Configuration}}
The experiments are conducted on a Linux server with dual Intel Xeon E5-2620 v3 CPUs (2.40GHz, 24 logical processors). The server employs an x86\_64 architecture with multiple cache levels (32KB L1, 256KB L2, and 15MB L3) and 378GB RAM. All methods are implemented in C++ using intrinsics for SIMD access, compiled with g++ 9.3.1 using the \textit{-Ofast} and \textit{-march=native} options. {SSE instructions are used by default for SIMD acceleration, and we also compare SSE (128-bit), AVX (256-bit), and AVX512 (512-bit)\footnote{{A new server is used for this evaluation due to the default does not support AVX512.}}.} The Eigen library \cite{eigen} optimizes matrix manipulations, while OpenMP facilitates parallel index construction and search execution across 24 threads.

\subsection{Indexing Performance}
\label{subsec: index perf}
The indexing times and index sizes for different methods are presented in Figures \ref{fig: index time} and \ref{fig: index size}, respectively. The indexing time for HNSW reflects only the graph index construction, while other methods include extra coding preprocessing. The results show that optimized methods exhibit shorter indexing times than the original HNSW, indicating that compact coding techniques effectively accelerate HNSW construction. HNSW-Flash consistently outperforms other methods across almost all datasets, achieving speedups of 10.4$\times$ to 22.9$\times$, highlighting \texttt{Flash}'s superiority in expediting index construction. Among baseline methods, HNSW-PCA and HNSW-SQ yield less than 2$\times$ speedup on most datasets, whereas HNSW-PQ demonstrates a better speedup ratio.
Figure \ref{fig: index size} illustrates that all optimized methods decrease the index size compared to the original HNSW by replacing high-dimensional vectors with compact vector codes. HNSW-PQ achieves the highest compression ratio, explaining its superior speedup in index construction. HNSW-Flash’s memory layout optimization, which stores neighbor codewords alongside IDs, results in a lower compression ratio than HNSW-PQ; however, it reaches a better speedup due to efficient memory access and arithmetic operations. Notably, HNSW-PQ suffers from a significant reduction in search performance (Figure \ref{fig: search perf}) due to its high compression error, indicating the poor index quality.

\begin{figure*}
% \vspace{-0.4cm}
  \setlength{\abovecaptionskip}{0cm}
  \setlength{\belowcaptionskip}{0cm}
  \centering
  \footnotesize
  \hspace{2cm}
  \stackunder[0.5pt]{\includegraphics[scale=0.25]{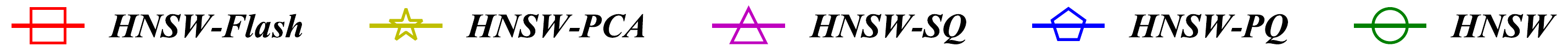}}{}
  \newline
  % \hspace{-1cm}
  \stackunder[0.5pt]{\includegraphics[scale=0.26]{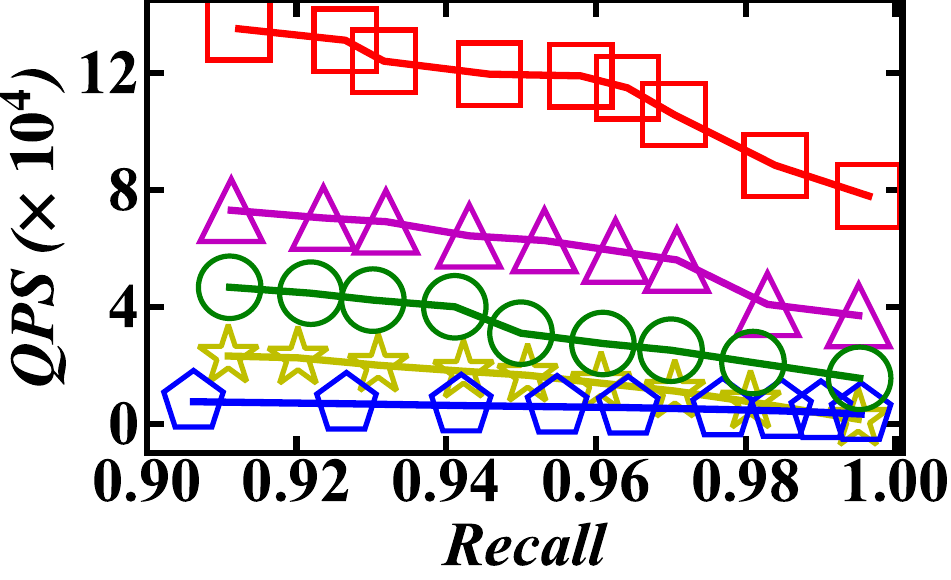}}{(a) SSNPP (10M)}
  \hspace{0.29cm}
  \stackunder[0.5pt]{\includegraphics[scale=0.26]{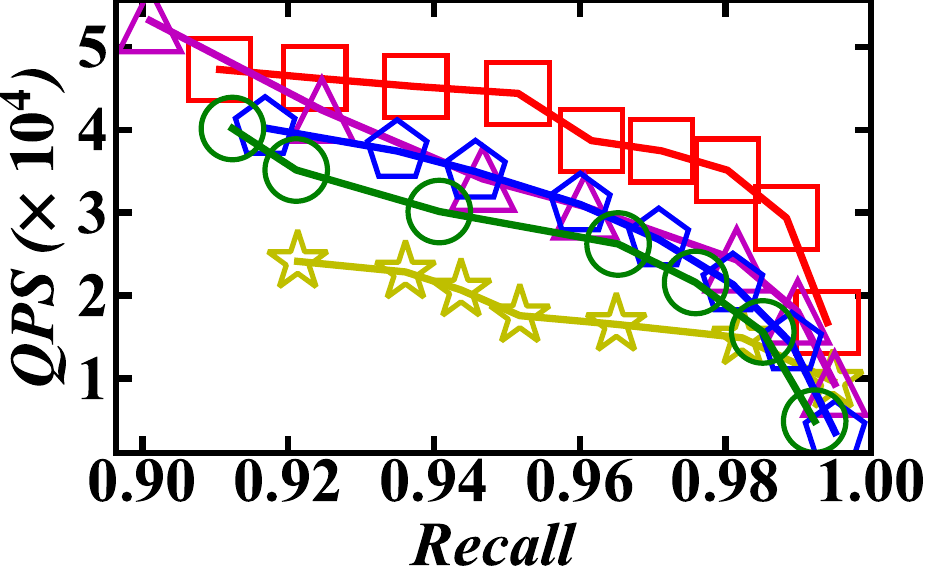}}{(b) LAION (10M)}
  \hspace{0.26cm}
  \stackunder[0.5pt]{\includegraphics[scale=0.26]{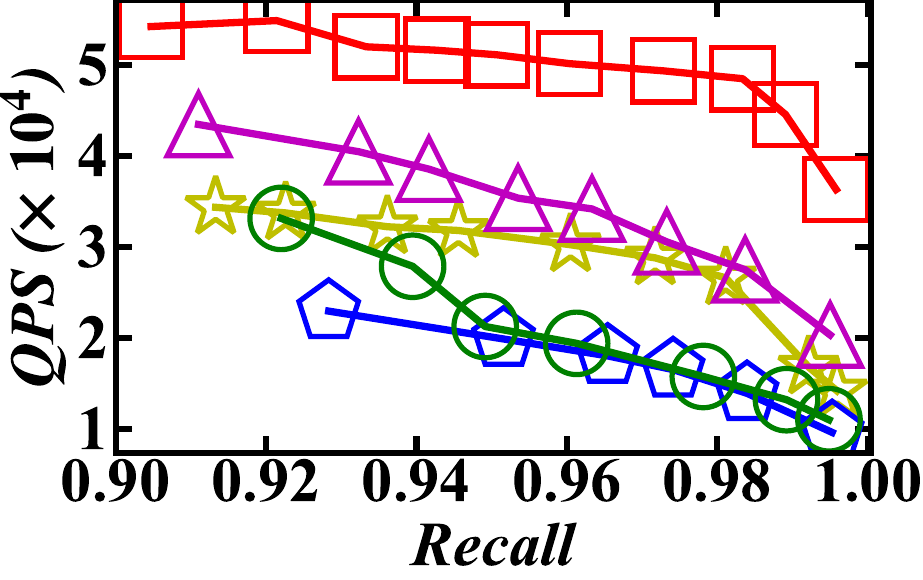}}{(c) COHERE (10M)}
  \hspace{0.26cm}
  \stackunder[0.5pt]{\includegraphics[scale=0.26]{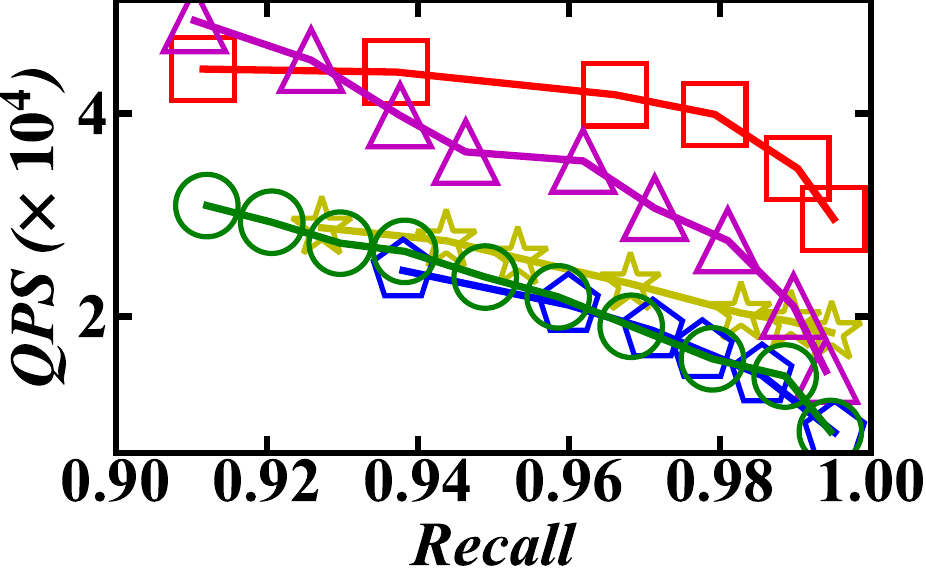}}{(d) BIGCODE (10M)}
  \newline
  \stackunder[0.5pt]{\includegraphics[scale=0.26]{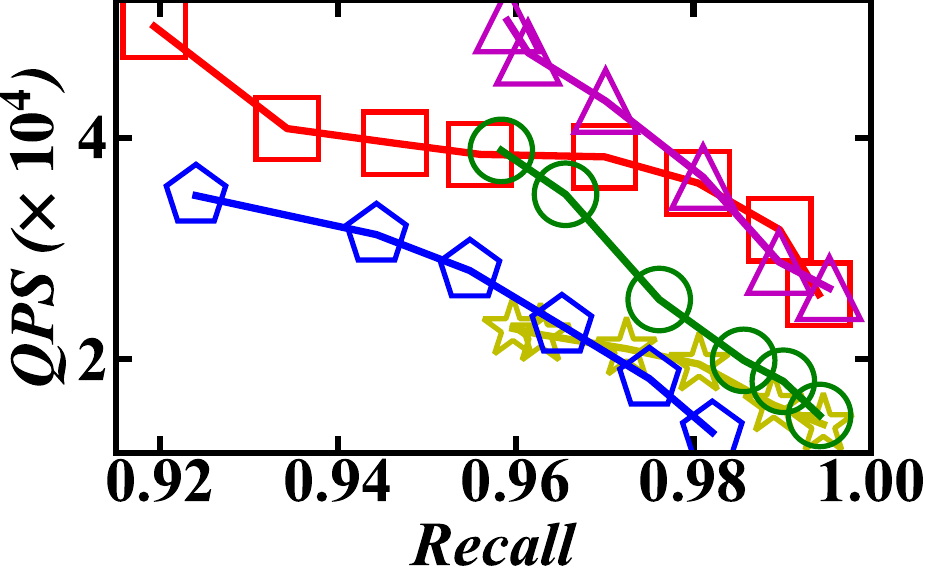}}{(e) IMAGENET (13M)}
  \hspace{0.25cm}
  \stackunder[0.5pt]{\includegraphics[scale=0.26]{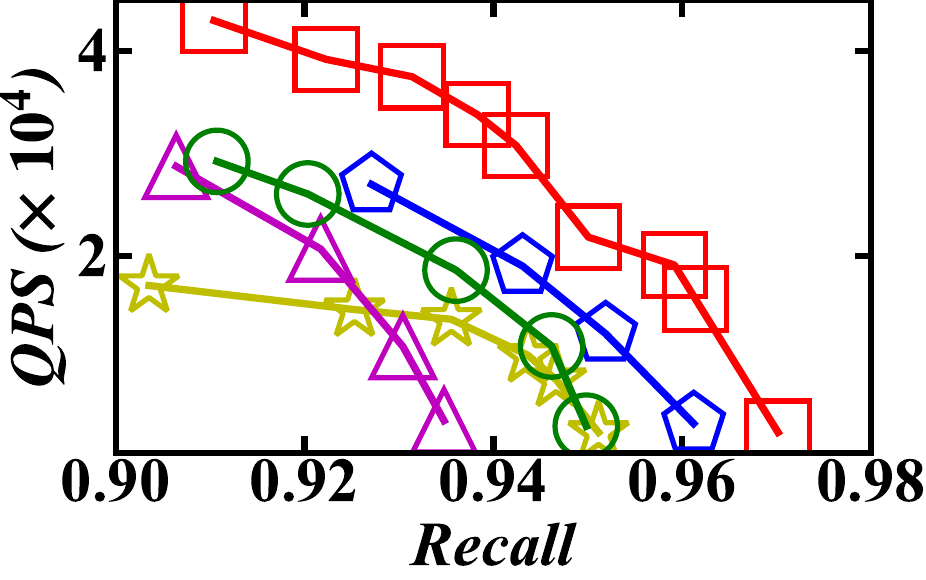}}{(f) DATACOMP (12M)}
  \hspace{0.22cm}
  \stackunder[0.5pt]{\includegraphics[scale=0.26]{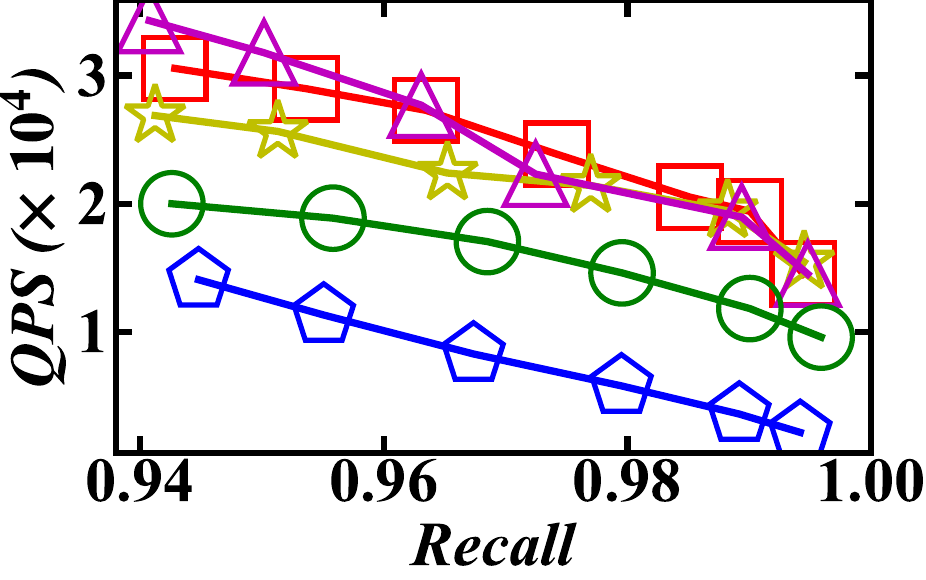}}{(g) ANTON (19M)}
  \hspace{0.35cm}
  \stackunder[0.5pt]{\includegraphics[scale=0.26]{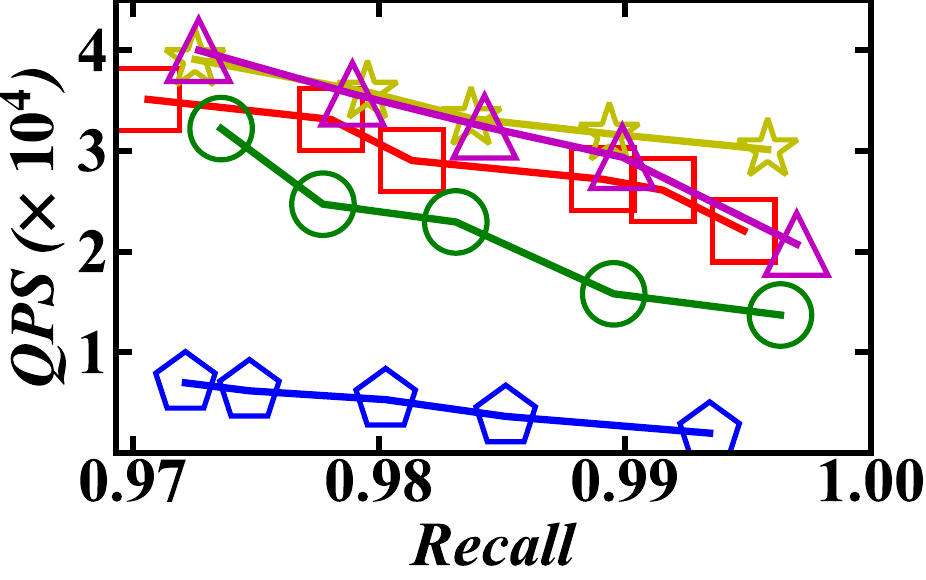}}{(h) ARGILLA (21M)}
  \newline
  \caption{Comparison of search performance across eight datasets (the top right indicates better performance).}
  \label{fig: search perf}
  \vspace{-0.3cm}
\end{figure*}

\begin{figure}
\vspace{-0.1cm}
  \setlength{\abovecaptionskip}{0cm}
  \setlength{\belowcaptionskip}{0cm}
  \centering
  \footnotesize
  \stackunder[0.5pt]{\includegraphics[scale=0.17]{figures/search_perf/search_perf_legend.pdf}}{}
  \newline
  \stackunder[0.5pt]{\includegraphics[scale=0.25]{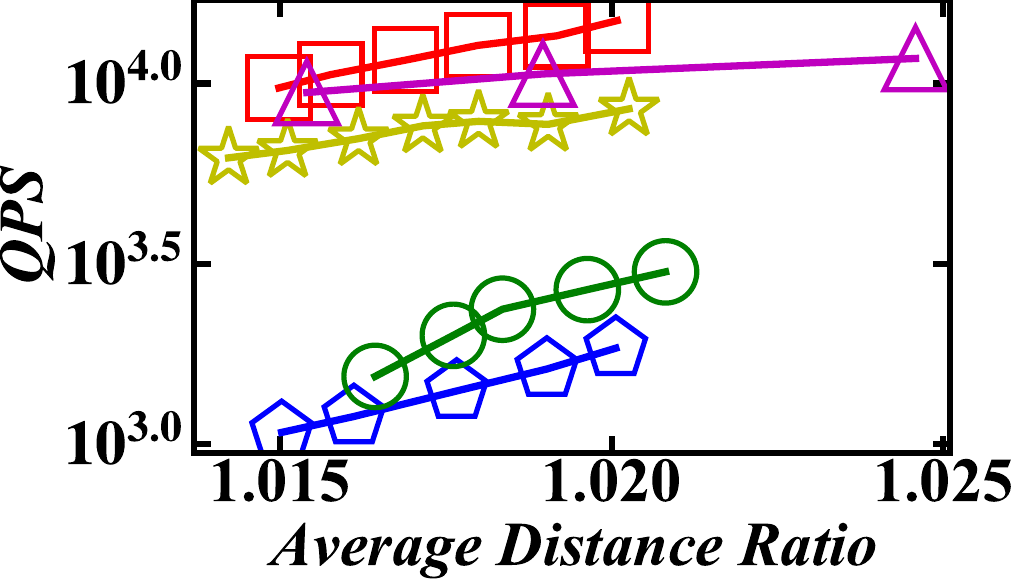}}{(a) LAION (10M)}
  % \hspace{0.15cm}
  \stackunder[0.5pt]{\includegraphics[scale=0.25]{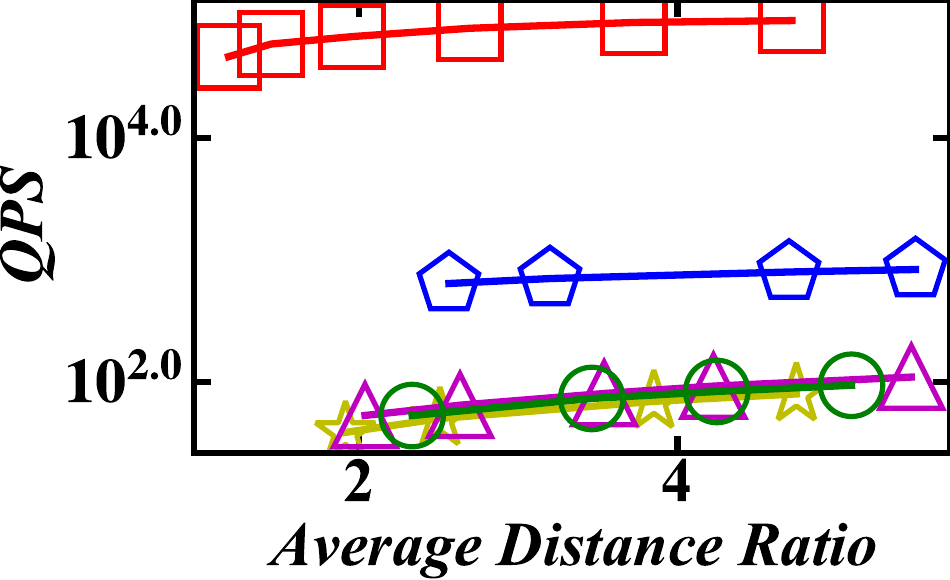}}{(b) SSNPP (10M)}
  \newline
  \caption{{The \textit{QPS}-\textit{ADR} curves (the top left is better).}}
  \label{fig: search adr}
  \vspace{-0.2cm}
\end{figure}

\subsection{Search Performance}
\label{subsec: search perf}

{Figure \ref{fig: search perf} presents the \textit{QPS}-\textit{Recall} curves for all evaluated datasets, while Figure \ref{fig: search adr} shows the \textit{QPS}-\textit{ADR} curves for two representative datasets, with analogous trends observed for the remaining datasets.} The results indicate that HNSW-Flash consistently outperforms other methods across nearly all datasets, demonstrating its ability to accelerate index construction without compromising search performance. Additionally, the optimizations in the Candidate Acquisition (CA) stage can be seamlessly integrated into the search procedure, as it shares similar steps with the CA process.
The search performance of baseline methods varies significantly across datasets. For instance, HNSW-PCA achieves the best performance on ARGILLA but the worst on LAION, indicating sensitivity to data distribution. While HNSW-PQ offers considerable speedup in index construction, its search performance is inferior to standard HNSW due to degraded index quality. HNSW-SQ shows consistent search performance gains on most datasets, aligning with prior studies that optimizes search procedure using SQ \cite{LVQ}. However, HNSW-SQ provides limited indexing speedup, suggesting that a compact coding method suitable for search may not be ideal for index construction. This occurs because index construction involves an additional Neighbor Selection (NS) step and is therefore more complex than the search process. {In addition, different methods may exhibit varying search performance rankings regarding \textit{Recall} and \textit{ADR} for the same dataset, indicating their false positive results are obviously different. Notably, \texttt{Flash} demonstrates substantial advantages in retrieving results closer to the ground truth vectors.}

\begin{figure}
\vspace{-0.1cm}
  \setlength{\abovecaptionskip}{0cm}
  \setlength{\belowcaptionskip}{0cm}
  \centering
  \footnotesize
  \stackunder[0.5pt]{\includegraphics[scale=0.26]{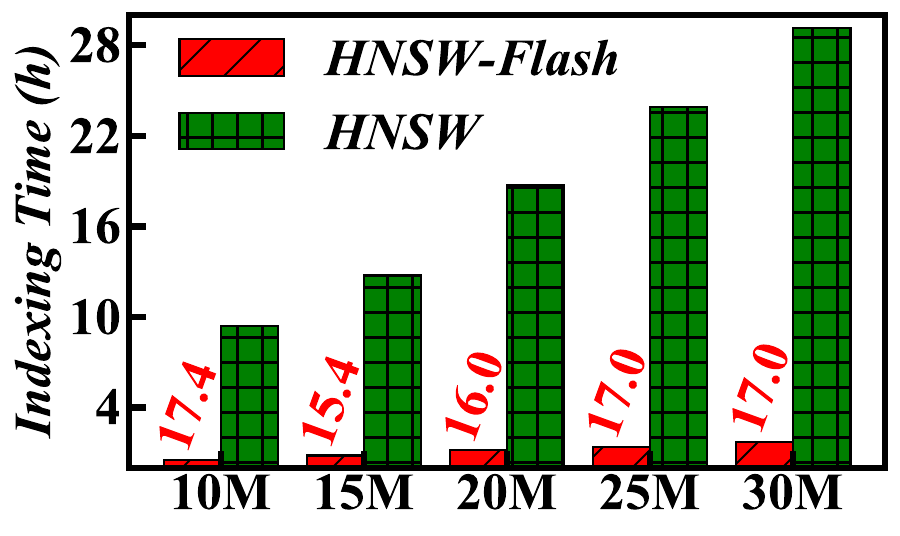}}{(a) LAION-30M}
  \hspace{0.15cm}
  \stackunder[0.5pt]{\includegraphics[scale=0.26]{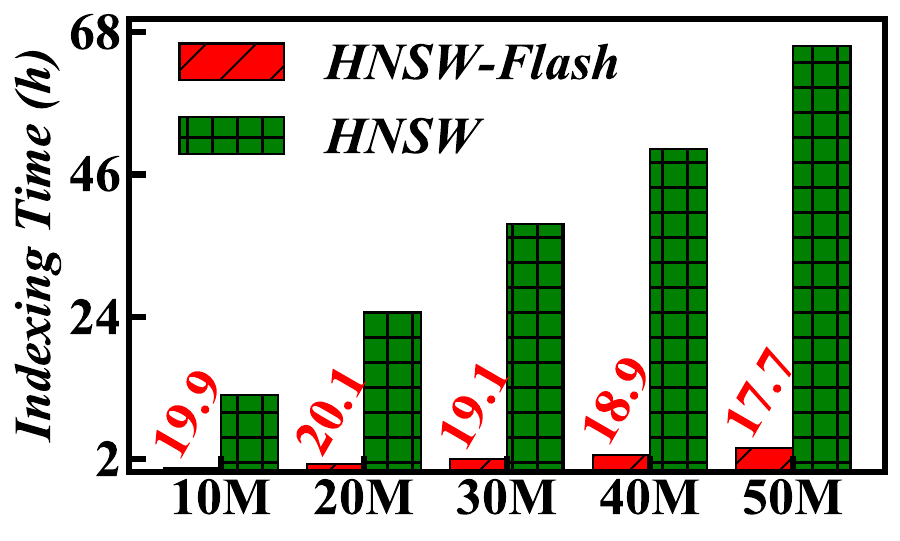}}{(b) SSNPP-50M}
  \newline
  \caption{Scalability over different data volumes.}
  \label{fig: scalability data volume}
  \vspace{-0.2cm}
\end{figure}

\subsection{Scalability}
\label{subsec: scalability}
We evaluate the scalability of HNSW-Flash regarding data volume and the number of data segments, benchmarking its performance against standard HNSW. Indexing times for both HNSW and HNSW-Flash are assessed on the LAION-100M and SSNPP-1B datasets. Figure \ref{fig: scalability data volume} presents indexing times across varying data volumes, with the data size within a single segment progressively increasing, while Figure \ref{fig: scalability segments} depicts indexing times for different segment counts, maintaining consistent data size per segment. For multi-segment evaluation, we accumulate the indexing times of each segment. The results demonstrate that HNSW-Flash significantly outperforms HNSW across all data volumes and segment counts. This highlights HNSW-Flash's superiority, particularly with larger datasets where HNSW incurs considerably higher indexing times. Consequently, HNSW-Flash achieves a notable reduction in absolute indexing times. In addition, HNSW-Flash can be seamlessly integrated into current distributed systems to expedite index construction within each segment, thereby providing cumulative acceleration.

\begin{figure}
% \vspace{-0.1cm}
  \setlength{\abovecaptionskip}{0cm}
  \setlength{\belowcaptionskip}{0cm}
  \centering
  \footnotesize
  \stackunder[0.5pt]{\includegraphics[scale=0.26]{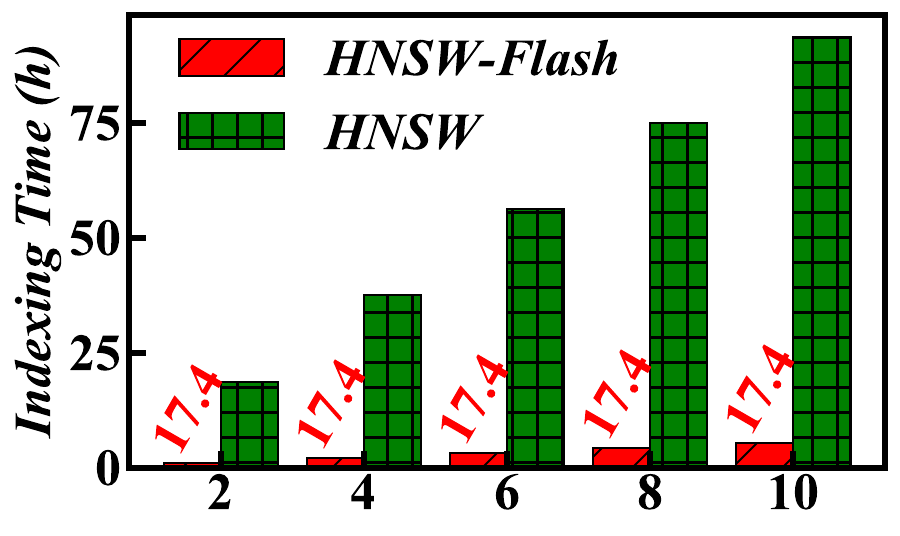}}{(a) LAION-100M}
  \hspace{0.15cm}
  \stackunder[0.5pt]{\includegraphics[scale=0.26]{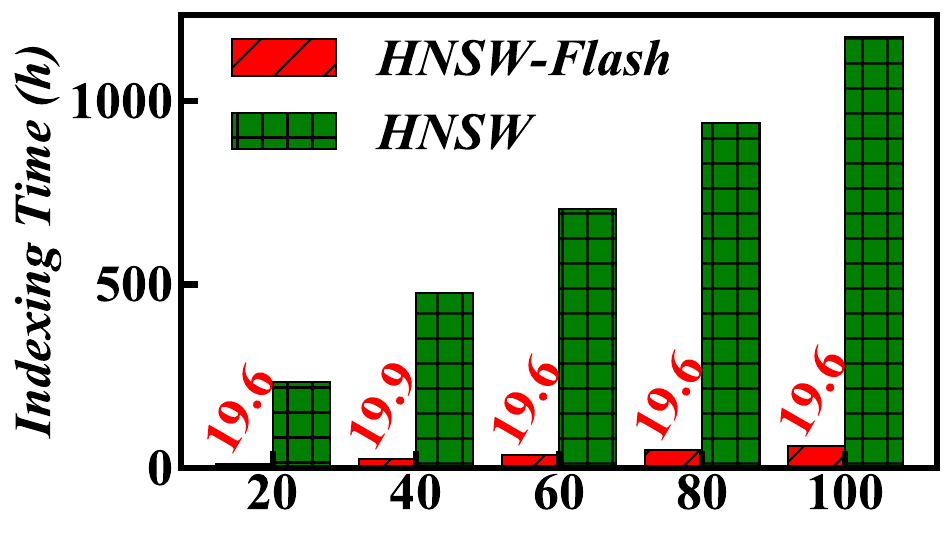}}{(b) SSNPP-1B}
  \newline
  \caption{Scalability over different segment counts.}
  \label{fig: scalability segments}
  \vspace{-0.2cm}
\end{figure}

\begin{figure}
\vspace{-0.2cm}
  \setlength{\abovecaptionskip}{0cm}
  \setlength{\belowcaptionskip}{0cm}
  \centering
  \footnotesize
  \stackunder[0.5pt]{\includegraphics[scale=0.26]{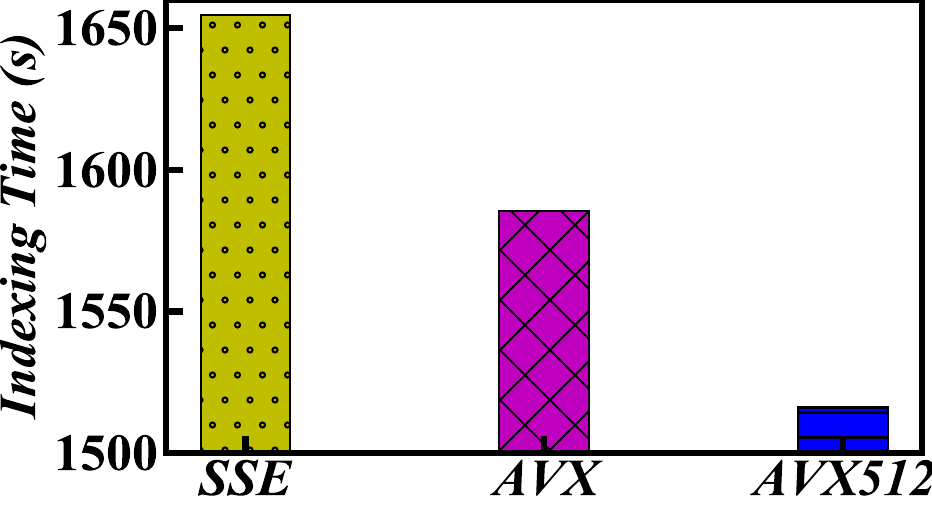}}{(a) LAION (10M)}
  \hspace{0.15cm}
  \stackunder[0.5pt]{\includegraphics[scale=0.26]{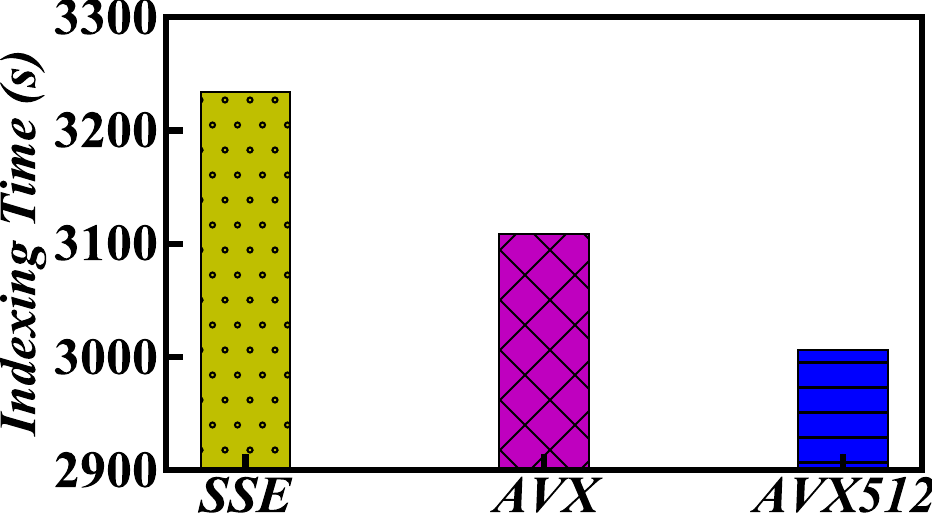}}{(b) SSNPP (10M)}
  \newline
  \caption{{Generality across different SIMD instruction sets.}}
  \label{fig: generality simd}
  \vspace{-0.2cm}
\end{figure}

\subsection{Generality}
\label{subsec: generality}
\subsubsection{{\textbf{SIMD Instructions}}}
\label{subsubsec: simd instructions}
{We integrate \texttt{Flash} into various SIMD instruction sets with differing register sizes to confirm its generality. We highlight \texttt{Flash} can be extended to 256-bit and 512-bit registers without altering its fundamental design principle. As shown in Figure \ref{fig: generality simd}, more advanced SIMD instructions with larger registers exhibit faster indexing processes. This is because SIMD instructions with larger registers can process more asymmetric distance tables (ADTs) per operation, resulting in more efficient arithmetic operations compared to those with smaller registers. Notably, the acceleration gained from larger registers is not strictly linear. Other factors, such as memory access patterns, also significantly impact indexing time. Moreover, the specific latency and throughput of instructions (such as register load times) vary across different SIMD instruction sets \cite{Intel-SIMD}, further affecting the overall speedup.}

\begin{figure}
\vspace{-0.2cm}
  \setlength{\abovecaptionskip}{0cm}
  \setlength{\belowcaptionskip}{0cm}
  \centering
  \footnotesize
  \stackunder[0.5pt]{\includegraphics[scale=0.26]{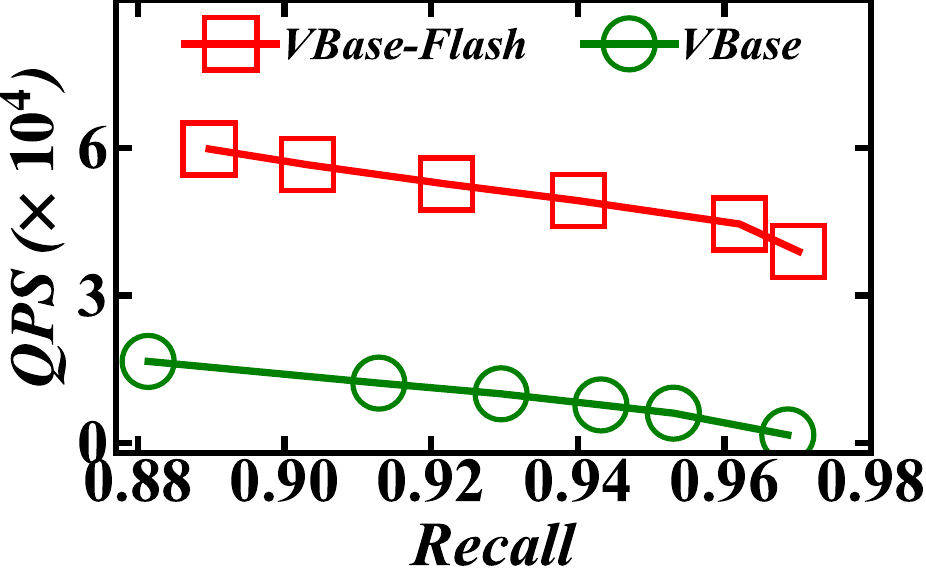}}{(a) VBase}
  \hspace{0.15cm}
  \stackunder[0.5pt]{\includegraphics[scale=0.26]{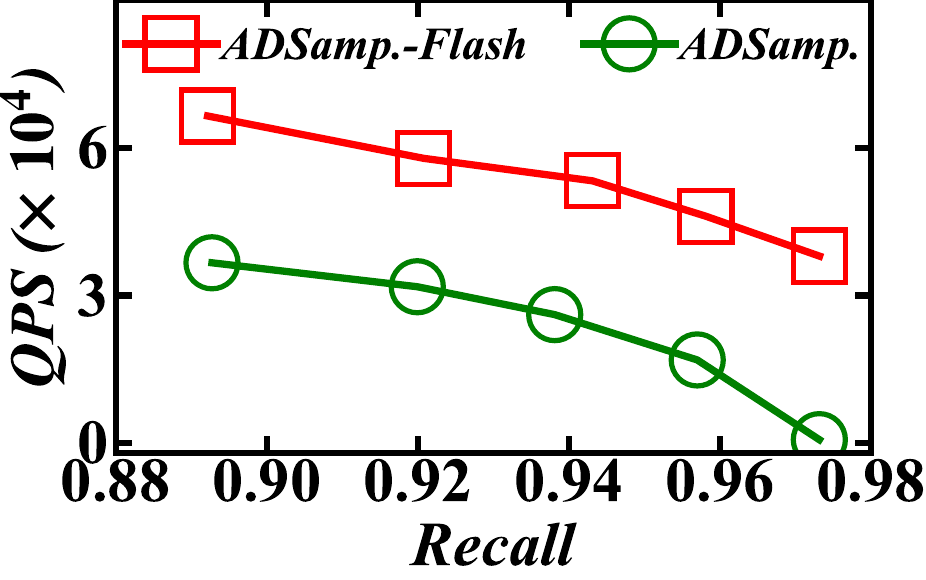}}{(b) ADSampling}
  \newline
  \caption{{Generality across different implementations.}}
  \label{fig: generality impl}
  \vspace{-0.2cm}
\end{figure}

\subsubsection{\textbf{{Optimized HNSW Implementations}}}
\label{subsubsec: opt hnsw impl}
{We apply \texttt{Flash} to optimized HNSW implementations, ADSampling \cite{ADSampling} and VBase \cite{zhang2023vbase}. These optimizations retain the standard HNSW construction process, enabling \texttt{Flash} to directly accelerate indexing, as shown in Figure \ref{fig: index time}. Figure \ref{fig: generality impl} presents the search performance of these implementations on LAION-10M before and after integrating \texttt{Flash}, demonstrating that \texttt{Flash} further improves search efficiency. This improvement stems from the fact that ADSampling’s enhancements to distance comparisons and VBase’s adjustments to the termination condition are orthogonal to \texttt{Flash}.}

\begin{figure}
% \vspace{-0.2cm}
  \setlength{\abovecaptionskip}{0cm}
  \setlength{\belowcaptionskip}{0cm}
  \centering
  \footnotesize
  \stackunder[0.5pt]{\includegraphics[scale=0.25]{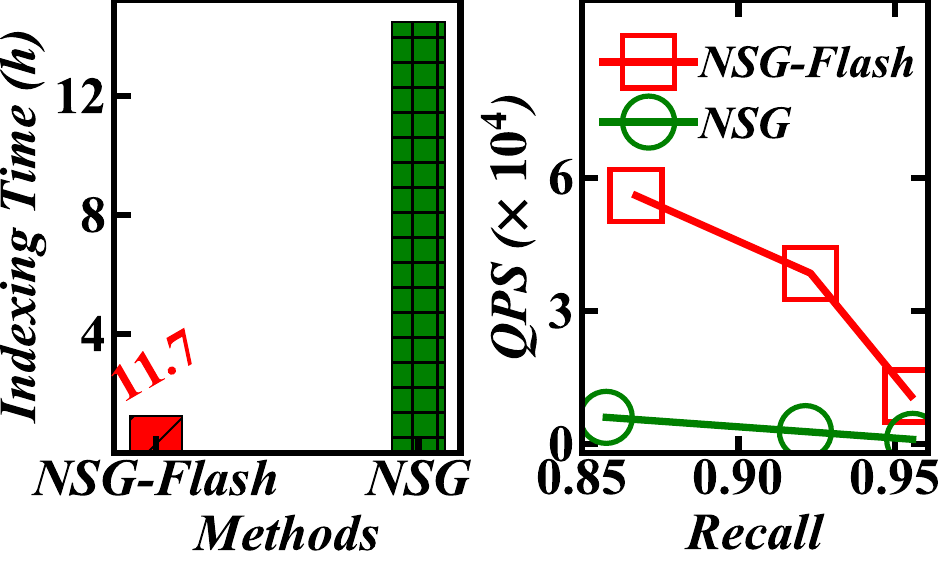}}{(a) NSG}
  \hspace{0.15cm}
  \stackunder[0.5pt]{\includegraphics[scale=0.25]{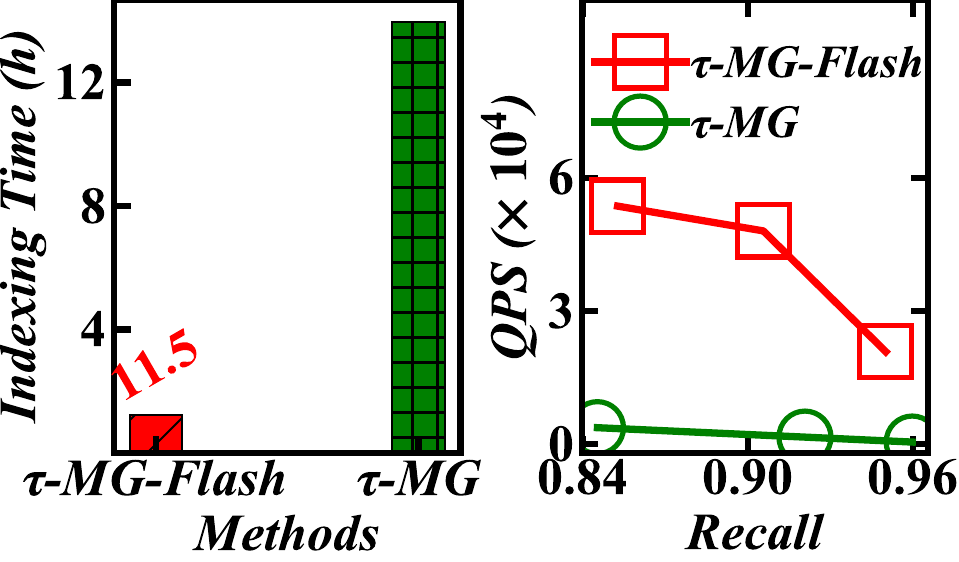}}{(b) $\tau$-MG}
  \newline
  \caption{{Generality across different graph algorithms.}}
  \label{fig: generality graph}
  \vspace{-0.2cm}
\end{figure}

\subsubsection{\textbf{{Graph Algorithms}}}
\label{subsubsec: graph algorithms}
{We apply \texttt{Flash} to two other representative graph algorithms, NSG \cite{NSG} and $\tau$-MG \cite{tau-MG}. Figure \ref{fig: generality graph} presents the indexing time and search performance on LAION-10M. The results indicate that \texttt{Flash} significantly accelerates the indexing process of NSG and $\tau$-MG while improving their search performance. These findings are consistent with those observed in HNSW. Both algorithms and HNSW share two key components—Candidate Acquisition (CA) and Neighbor Selection (NS)—in their indexing processes. Consequently, enhancing the CA and NS steps accelerates the indexing procedure across these graph algorithms.}

\subsection{Ablation Study}
\label{subsec: ablation}

\begin{table}[t]
\vspace{-0.2cm}
 \setstretch{0.9}
 \fontsize{7.5pt}{4mm}\selectfont
  \caption{L1 cache misses before and after optimization.}
  \vspace{-0.4cm}
  \label{tab:cache misses}
  \setlength{\tabcolsep}{.007\linewidth}{
  \begin{tabular}{l|l|l|l|l|l|l|l|l}
    \hline
     & SSNPP & LAION & COHE. & BIGC. & IMAGE. & DATAC. & ANTON & ARGIL. \\
    \hline
    \hline
     \textbf{w/o opt.} & 19.08\% & 24.20\% & 24.97\% & 25.09\% & 23.02\% & 24.03\% & 25.92\% & 25.98\% \\
    \hline
     \textbf{w. opt.} & 5.21\% & 7.90\% & 6.17\% & 6.72\% & 7.38\% & 7.05\% & 5.81\% & 4.86\% \\
    \hline
  \end{tabular}
  }\vspace{-0.3cm}
\end{table}

\subsubsection{\textbf{Memory Accesses}}
\label{subsubsec: memory access}
We utilize the \textit{perf} tool to record CPU hardware counters for data reads and cache hits. Table \ref{tab:cache misses} shows the L1 cache misses during index construction before and after our optimization across eight datasets. Each evaluation is conducted on one million samples per dataset, maintaining consistent indexing parameters. The results indicate a consistent reduction in cache misses across all eight datasets with \texttt{Flash}, demonstrating its efficacy in accelerating index construction by minimizing random memory accesses. Given that cache access is approximately 100 times faster than RAM access, even minor reductions in cache misses can significantly enhance the speed of distance computations \cite{ColemanSSS22}.

\subsubsection{\textbf{Arithmetic Operations}}

\begin{table}[t]
\vspace{-0.1cm}
 \setstretch{0.9}
 \fontsize{7.5pt}{4mm}\selectfont
  \caption{Indexing time without and with SIMD optimization.}
  \vspace{-0.4cm}
  \label{tab:SIMD opt}
  \setlength{\tabcolsep}{.0045\linewidth}{
  \begin{tabular}{l|l|l|l|l|l|l|l|l}
    \hline
     & SSNPP & LAION & COHE. & BIGC. & IMAGE. & DATAC. & ANTON & ARGIL. \\
    \hline
    \hline
     \textbf{w/o opt. (s)} & 233 & 154 & 270 & 214 & 88 & 154 & 157 & 140 \\
    \hline
     \textbf{w. opt. (s)} & 129 & 131 & 164 & 147 & 84 & 135 & 138 & 111 \\
    \hline
  \end{tabular}
  }\vspace{-0.2cm}
\end{table}

To assess the impact of SIMD optimization, we compare indexing times with and without this option in Table \ref{tab:SIMD opt}, using the same dataset settings as described in Section \ref{subsubsec: memory access}. The results indicate that SIMD optimization can reduce indexing time by up to 45\%, highlighting its effectiveness in accelerating index construction through efficient arithmetic operations. Notably, the indexing times in Table \ref{tab:SIMD opt} include vector coding time, which constitutes approximately 10\% of the total indexing time; however, SIMD optimization does not affect the coding time.

\begin{table}[t]
% \vspace{-0.1cm}
 \setstretch{0.9}
 \fontsize{7.5pt}{4mm}\selectfont
  \caption{Coding time (CT) and total indexing time (TIT).}
  \vspace{-0.4cm}
  \label{tab:coding time}
  \setlength{\tabcolsep}{.007\linewidth}{
  \begin{tabular}{l|l|l|l|l|l|l|l|l}
    \hline
     & SSNPP & LAION & COHE. & BIGC. & IMAGE. & DATAC. & ANTON & ARGIL. \\
    \hline
    \hline
     \textbf{CT (h)} & 0.016 & 0.049 & 0.049 & 0.048 & 0.060 & 0.087 & 0.165 & 0.164 \\
    \hline
     \textbf{TIT (h)} & 0.599 & 0.542 & 0.668 & 0.619 & 0.487 & 0.730 & 1.035 & 1.083 \\
    \hline
  \end{tabular}
  }\vspace{-0.3cm}
\end{table}

\begin{figure}
  \setlength{\abovecaptionskip}{0cm}
  \setlength{\belowcaptionskip}{-0.4cm}
  \centering
  \footnotesize
  \stackunder[0.5pt]{\includegraphics[scale=0.33]{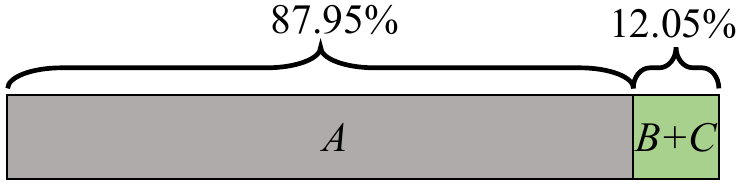}}{(a) LAION-1M ($D=768$)}
  \hspace{0.15cm}
  \stackunder[0.5pt]{\includegraphics[scale=0.32]{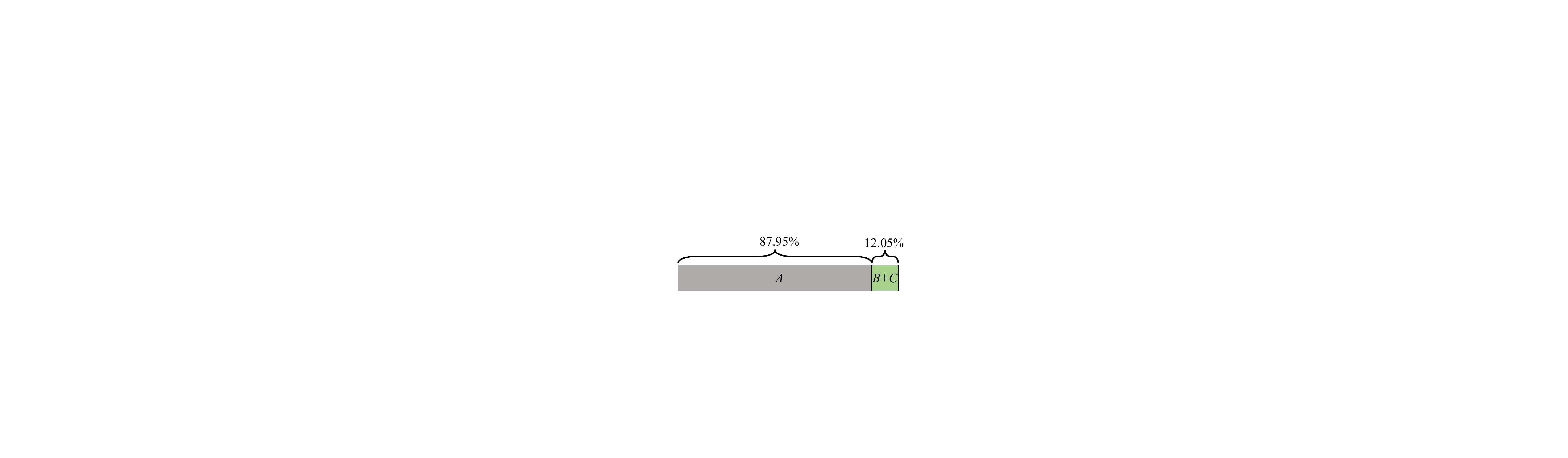}}{(b) ARGILLA-1M ($D=1024$)}
  \newline
  \caption{Profiling of graph index construction time in HNSW-Flash. The distance computation time (\textit{B}+\textit{C}) consists of memory accesses (\textit{B}) and arithmetic operations (\textit{C}). \textit{A} is the time of other tasks, such as data structure maintenance.}
  \label{fig: graph construction time profile}
  \vspace{-0.1cm}
\end{figure}

\subsubsection{\textbf{Profiling of indexing time}}
Table \ref{tab:coding time} presents the coding time (CT) for \texttt{Flash} alongside the total indexing time (TIT) for HNSW-Flash, with TIT encompassing both CT and the graph index construction time (GIT). Data scales for all datasets align with those in Figure \ref{fig: index time}.
The results show that CT constitutes only 10\% of TIT, demonstrating that \texttt{Flash} requires little processing time. We further profile GIT using the \textit{perf} tool in Figure \ref{fig: graph construction time profile}. The results indicate that distance computation occupies a small fraction of GIT for HNSW-Flash. This suggests the primary bottlenecks associated with memory accesses and arithmetic operations in HNSW have been removed through the integration of \texttt{Flash}.

\subsection{Parameter Sensitivity}
\label{subsec: param sensitivity}

\begin{figure}
  \setlength{\abovecaptionskip}{0cm}
  \setlength{\belowcaptionskip}{0cm}
  \centering
  \footnotesize
  \stackunder[0.5pt]{\includegraphics[scale=0.2]{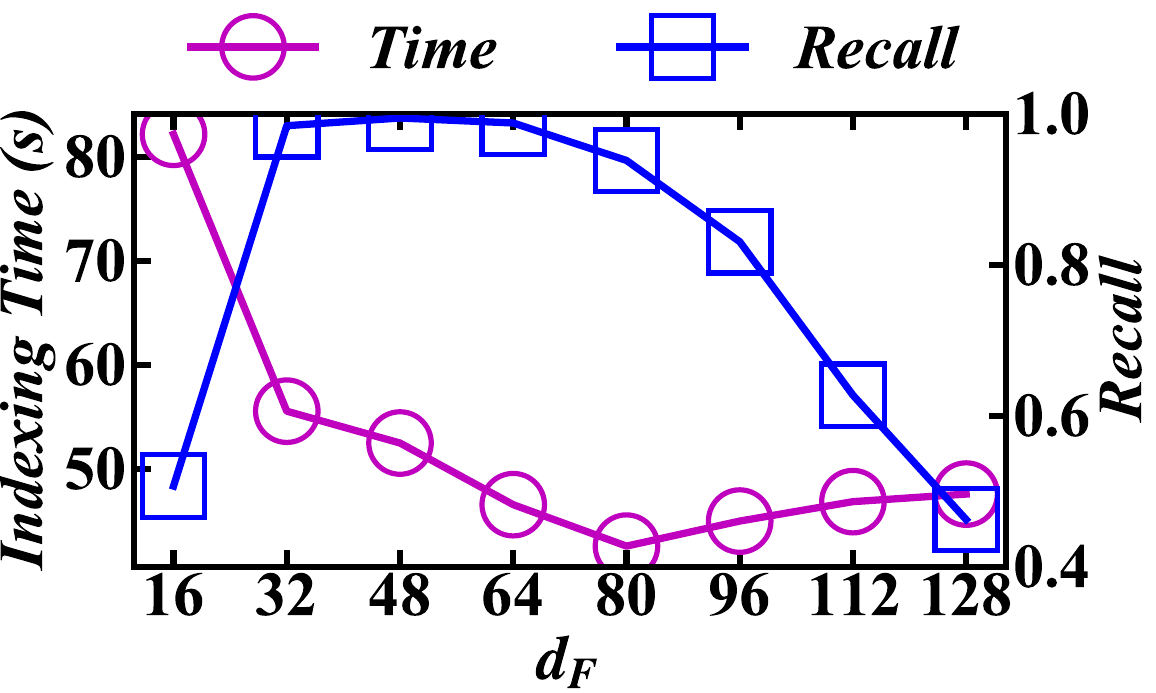}}{(a) LAION-1M ($M_{F}=16$)}
  \hspace{0.15cm}
  \stackunder[0.5pt]{\includegraphics[scale=0.2]{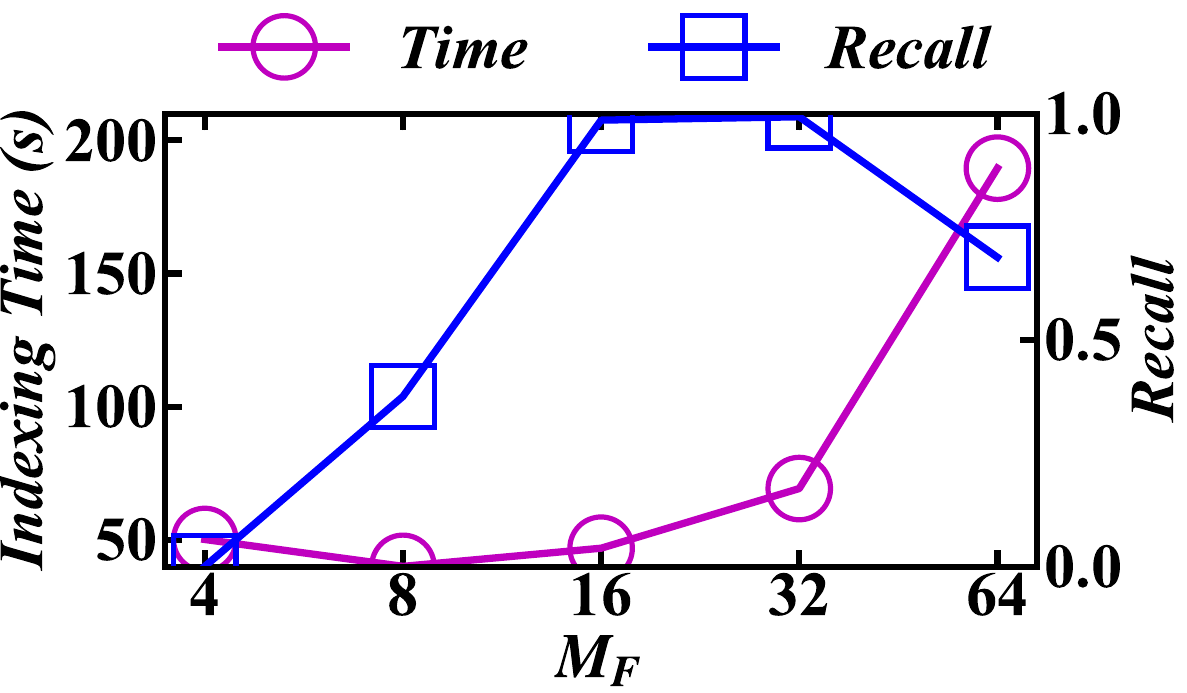}}{(b) LAION-1M ($d_{F}=64$)}
  \newline
  \caption{Effect of parameters within \texttt{Flash}.}
  \label{fig: HNSW-Flash param}
  \vspace{-0.3cm}
\end{figure}

HNSW-Flash has two adjustable parameters to balance construction efficiency and index quality: the dimensions of principal components ($d_F$) and the number of subspaces ($M_F$). Other parameters, like bits per subspace ($L_F$) are fixed to align with index features and hardware constraints.
Figure \ref{fig: HNSW-Flash param} illustrates indexing time and search accuracy across varying $d_F$ and $M_F$, using the recall rate at a given latency as a measure of index quality. In Figure \ref{fig: HNSW-Flash param}(a), with $M_F$ fixed at 16, index quality improves with increasing $d_F$ initially but subsequently declines beyond a certain threshold. Correspondingly, indexing time decreases initially, then rises after reaching a critical point. This phenomenon arises from potential information loss at lower dimensions and redundancy at higher dimensions, given the fixed $L_F$. Notably, the optimal points for indexing time and search accuracy occur at different $d_F$, reflecting the distinct sensitivities of index construction and search procedures to information content. While the search focuses on nearest neighbors, index construction requires navigable neighbors, even if more distant \cite{HNSW,NSG}.
In Figure \ref{fig: HNSW-Flash param}(b), with $d_F$ set to 64, increasing $M_F$ extends indexing time. The search accuracy improves initially but later declines due to increased computational overhead and additional register loads.
\section{Discussion}
\label{sec: summary}
Based on our research, we highlight three notable findings:

(1) \textsf{A compact coding method that significantly enhances search performance may not be suitable for index construction.} Recent research \cite{DiskANN,yue2023routing,HVS,yang2024bridging} has integrated compact coding techniques, such as PQ \cite{PQ}, into the search process while preserving the original HNSW index construction. This approach improves search performance but incurs additional indexing time. Note that the index construction process presents greater challenges due to its more complex execution logic relative to the search procedure \cite{HNSW,graph_survey_vldb2021}. Consequently, directly applying existing compact coding methods to HNSW construction may not yield substantial improvements in indexing speed. Our empirical results indicate that while HNSW-SQ enhances search performance compared to standard HNSW, it provides only marginal improvements in indexing speed.

(2) \textsf{Reducing more dimensions may bring higher accuracy.} While it might seem intuitive that preserving more dimensions during vector dimensionality reduction would increase accuracy \cite{ADSampling,tau-MG,yang2024bridging}, focusing on principal components is advisable when vector representation is constrained by limited bit allocation. This approach reduces the impact of less significant components, which tend to suffer from poor bit usage. For instance, \texttt{Flash} leverages lower-dimensional principal components, significantly accelerating index construction and delivering superior search performance compared to higher-dimensional counterparts (Figure \ref{fig: HNSW-Flash param}).

(3) \textsf{Encoding vectors and distances with a tiny amount of bits to align with hardware constraints may yield substantial benefits.} In compact coding methods, fewer bits typically enhance efficiency but may compromise accuracy, while more bits can improve accuracy at the cost of efficiency \cite{PQ,OPQ}. Therefore, determining an optimal bit count is essential for balancing accuracy and efficiency. However, this optimal count may not align with hardware constraints. Addressing this requires redefining the balance to accommodate hardware specifications. With tailored hardware optimizations, using fewer bits may achieve a better balance than the original optimal counts. For example, \texttt{Flash} encodes distances with 8 bits to leverage SIMD capabilities, achieving a more favorable balance than the standard 32-bit configuration.

\section{Conclusion}
\label{sec: conclusion}
In this paper, we investigate the index construction efficiency of graph-based ANNS methods on modern CPUs, using the representative HNSW index. We identify that low construction efficiency arises from high memory access latency and suboptimal arithmetic operation efficiency in distance computations. We develop three baseline methods incorporating existing compact coding techniques into the HNSW construction process, yielding valuable lessons. This leads to the design of \texttt{Flash}, tailored for the HNSW construction process. \texttt{Flash} reduces random memory accesses and leverages SIMD instructions, improving cache hits and arithmetic operations. Extensive evaluations confirm \texttt{Flash}'s superiority in accelerating graph indexing while maintaining or enhancing search performance. {We also apply \texttt{Flash} to other graph indexes, optimized HNSW implementations, and various SIMD instruction sets to verify its generality.}

\bibliographystyle{ACM-Reference-Format}
\bibliography{myref}

\end{document}